\NeedsTeXFormat{LaTeX2e}
\documentclass[12pt]{article}
\usepackage{amscd,amsmath,amssymb,amstext,amsthm,exscale,latexsym}
\usepackage{graphicx}
\textwidth160mm
\textheight 250mm
\topmargin -25mm
\oddsidemargin 10mm
\evensidemargin -10mm
\usepackage{cmap}  \usepackage[T2A]{fontenc}
\usepackage[pdftex,unicode=true,bookmarks=true]{hyperref}
\newtheorem{prop}{Proposition}[section]
\begin{document}
\title     {Global properties of warped solutions in General Relativity with
            electromagnetic field and cosmological constant}
\author    {D. E. Afanasev
            \thanks{E-mail: daniel\_afanasev@yahoo.com}\\
            \sl High school N1561, ul.\ Paustovskogo, 6, kor.\ 2, 117464,
            \sl Moscow\\
            M. O. Katanaev
            \thanks{E-mail: katanaev@mi-ras.ru}\\
            \sl Steklov mathematical institute,
            \sl ul.~Gubkina, 8, Moscow, 119991, Russia}
\date      {20 March 2019}
\maketitle
\begin{abstract}
We consider general relativity with cosmological constant minimally coupled to
electromagnetic field and assume that four-dimensional space-time manifold is
the warped product of two surfaces with Lorentzian and Euclidean signature
metrics. Einstein's equations imply that at least one
of the surfaces must be of constant curvature. It means that the symmetry of the
metric arises as the consequence of equations of motion (``spontaneous
symmetry emergence''). We give classification of global solutions in two cases:
(i) both surfaces are of constant curvature and (ii) the Riemannian surface is
of constant curvature. The latter case includes spherically symmetric solutions
(sphere ${\mathbb S}^2$ with ${\mathbb S}{\mathbb O}(3)$-symmetry group), planar solutions
(two-dimensional Euclidean space ${\mathbb R}^2$ with ${\mathbb I}{\mathbb O}(2)$-symmetry group), and
hyperbolic solutions (two-sheeted hyperboloid ${\mathbb H}^2$ with
${\mathbb S}{\mathbb O}(1,2)$-symmetry). Totally, we get 37 topologically different solutions.
There is a new one among them, which describes changing topology of space in
time already at the classical level.
\end{abstract}
\section{Introduction}
There are many well known exact solutions in general relativity (see, i.e.\
\cite{KrStMaHe80}). To give physical
interpretation of any solution to Einstein's equation, we must know not only the
metric satisfying equations of general relativity but the global structure of
space-time. By this we mean a pair $({\mathbb M},g)$, where ${\mathbb M}$ is the
four-dimensional space-time manifold and $g$ is the metric on ${\mathbb M}$ such, that
manifold ${\mathbb M}$ is maximally extended along geodesics: any geodesic line on ${\mathbb M}$
either can be continued to infinite value of the canonical parameter in both
directions, or it ends up at a singular point, where one of the geometric
invariants becomes infinite. The famous example is the
Kruskal--Szekeres extension \cite{Kruska60,Szeker60} of the Schwarzschild
solutions. In this case, the space-time ${\mathbb M}$ is globally the topological
product of a sphere (spherical symmetry) with the two-dimensional Lorentzian
surface depicted by the well known Carter--Penrose diagram. The knowledge of
this global structure of space-time allows one to introduce the notion of black
and white holes.

The famous Reissner--Nordstr\"om solution \cite{Reissn16,Nordst18}, which is the
spherically symmetric solution of Einstein's equations with electromagnetic
field, is also known globally. There are three types of Carter--Penrose
diagrams: the Reissner-Nordstr\"om black hole, extremal black hole and naked
singularity. The type of the Carter--Penrose diagram depends on the relation
between mass and charge parameters. The spherically symmetric exact solution of
Einstein's equations with electromagnetic field and cosmological constant is
known locally but not analyzed in full detail globally. In this paper, in
particular, we give complete classification of global spherically symmetric
solutions of Einstein's equations with electromagnetic field and cosmological
constant, which depends on relations between three parameters: mass, charge, and
cosmological constant. We show that there are 16 different Carter--Penrose
diagrams in the spherically symmetric case.

In fact, more general classification is given. We do not assume that
solutions have any symmetry from the very beginning. Instead, we require the
space-time to be the warped product of two surfaces: ${\mathbb M}={\mathbb U}\times{\mathbb V}$, where
${\mathbb U}$ and ${\mathbb V}$ are two two-dimensional surfaces with Lorentzian and Euclidean
signature metrics, respectively. As the consequence of the equations of motion,
at least one of the surfaces must be of constant curvature. In this paper, we
consider the cases when (i) both surfaces ${\mathbb U}$ and ${\mathbb V}$ are of constant
curvature and when (ii) only surface ${\mathbb V}$ is of constant curvature. In the
latter case, there are three possibilities: ${\mathbb V}$ is the sphere ${\mathbb S}^2$ (the
spherical ${\mathbb S}{\mathbb O}(3)$ symmetry), the Euclidean plane (the Poincare
${\mathbb I}{\mathbb S}{\mathbb O}(2)$ symmetry), and the two-sheeted hyperboloid ${\mathbb H}^2$ (the
Lorentzian ${\mathbb S}{\mathbb O}(1,2)$ symmetry). We see that the symmetry of solutions is not
assumed from the beginning but arise as the consequence of the equations of
motions. This effect is called ``spontaneous symmetry emergence''. We classify
all global solutions by drawing their Carter--Penrose diagrams for surface
${\mathbb U}$ depending on relations between mass, charge, and cosmological constant.
Totally, there are 4 different Carter--Penrose diagrams in case (i) and 33
globally different solutions in case (ii).

Moreover, we prove that there is the additional forth Killing vector field in
each case. This is a generalization of Birkhoff's theorem stating that any
spherically symmetric solution of vacuum Einstein's equations must be static.
The existence of extra Killing vector field is proved for ${\mathbb S}{\mathbb O}(3)$,
${\mathbb I}{\mathbb S}{\mathbb O}(2)$, and ${\mathbb S}{\mathbb O}(1,2)$ symmetry groups.

This paper follows the classification of global warped product solutions of
general relativity with cosmological constant (without electromagnetic field)
given in \cite{KaKlKu99}. The Carter--Penrose diagrams are constructed using the
conformal block method described in \cite{Katana00A}.

As in \cite{KaKlKu99}, we assume that space-time ${\mathbb M}$ is the warped product
of two surfaces: ${\mathbb M}={\mathbb U}\times{\mathbb V}$, where ${\mathbb U}$ and ${\mathbb V}$ are surfaces with
Lorentzian and Euclidean signature metrics, respectively. Local coordinates on
${\mathbb M}$ are denoted by $x^i$, $i=0,1,2,3$, and coordinates on the surfaces by
Greek letters from the beginning and middle of the alphabet:
\begin{equation*}
  (x^\alpha)\in{\mathbb U},\quad\alpha=0,1,\qquad(y^\mu)\in{\mathbb V},\quad\mu=2,3.
\end{equation*}
That is $(x^i):=(x^\alpha,y^\mu)$. Geometrical notions on four-dimensional
space-time are marked by the hat to distinguish them from notions on surfaces
${\mathbb U}$ and ${\mathbb V}$, which appear more often.

We do not assume any symmetry of solutions from the very beginning.

Four-dimensional metric of the warped product of two surfaces has block diagonal
form by definition:
\begin{equation}                                                  \label{egdtrf}
  \widehat g_{ij}=\begin{pmatrix} k(y)g_{\alpha\beta}(x) & 0 \\ 0 & m(x)h_{\mu\nu}(y)
  \end{pmatrix},
\end{equation}
where $g_{\alpha\beta}(x)$ and $h_{\mu\nu}(y)$ are some metrics on surfaces ${\mathbb U}$ and
${\mathbb V}$, respectively, $k(y)$ and $m(x)$ are scalar (dilaton) fields on
${\mathbb V}$ and ${\mathbb U}$.

The Ricci tensor components for metric (\ref{egdtrf}) are
\begin{equation}                                                  \label{eritab}
\begin{split}
  \widehat R_{\alpha\beta}&=R_{\alpha\beta}+\frac{\nabla_\alpha\nabla_\beta m}m
  -\frac{\nabla_\alpha m\nabla_\beta m}{2m^2}+\frac{g_{\alpha\beta}\nabla^2 k}{2m}
\\
  \widehat R_{\alpha\mu}&=\widehat R_{\mu\alpha}=-\frac{\nabla_\alpha m\nabla_\mu k}{2mk}
\\
 \widehat R_{\mu\nu}&=R_{\mu\nu}+\frac{\nabla_\mu\nabla_\nu k}k
  -\frac{\nabla_\mu k\nabla_\nu k}{2k^2}+\frac{h_{\mu\nu}\nabla^2 m}{2k},
\end{split}
\end{equation}
where, for brevity, we introduce notation
\begin{equation}                                                  \label{edalan}
  \nabla^2 m:=g^{\alpha\beta}\nabla_\alpha\nabla_\beta m,\qquad
  \nabla^2 k:=h^{\mu\nu}\nabla_\mu\nabla_\nu k.
\end{equation}
Here and in what follows symbol $\nabla$ denotes covariant derivative with the
corresponding Christoffel's symbols. The four-dimensional scalar curvature is
\begin{equation}                                                  \label{escacu}
  \widehat R=\frac1kR^g+2\frac{\nabla^2 m}{km}-\frac{(\nabla m)^2}{2km^2}
  +\frac1mR^h+2\frac{\nabla^2 k}{km}-\frac{(\nabla k)^2}{2k^2m},
\end{equation}
where
\begin{equation}                                                  \label{egrsqn}
  (\nabla m)^2:=g^{\alpha\beta}\partial_\alpha m\partial_\beta m,\qquad
  (\nabla k)^2:=h^{\mu\nu}\partial_\mu k\partial_\nu k.
\end{equation}
Scalar curvatures of surfaces ${\mathbb U}$ and ${\mathbb V}$ are denoted by $R^g$ and $R^h$,
respectively.
\section{Solution for electromagnetic field}
We assume that electromagnetic field is minimally coupled to gravity. Then the
action takes the form
\begin{equation}                                                  \label{ubcftg}
  S=\int\!dx\sqrt{|\widehat g|}\left(\widehat R-2\Lambda-\frac14\widehat F^2\right),
\end{equation}
where $\widehat R$ is the scalar curvature for metric $\widehat g_{ij}$,
$\widehat g:=\det\widehat g_{ij}$, $\Lambda$ is a cosmological constant, and
$\widehat F^2$ is the square of electromagnetic field strength:
\begin{equation*}
  \widehat F^2:=\widehat F_{ij}\widehat F^{ij},\qquad
  \widehat F_{ij}:=\partial_i \widehat A_j-\partial_j\widehat A_i.
\end{equation*}
Here, $\widehat A_i$ are components of electromagnetic field potential. For
brevity, gravitational and electromagnetic coupling constants are set to unity.

Variation of action (\ref{ubcftg}) with respect to metric yields
four-dimensional Einstein's equations:
\begin{equation}                                                  \label{ubsvgr}
  \widehat R_{ij}-\frac12\widehat g_{ij}\widehat R+\widehat g_{ij}\Lambda
  =-\frac12 \widehat T_{{\textsc{e}}{\textsc{m}} ij},
\end{equation}
where
\begin{equation}                                                  \label{ubnhgy}
  \widehat T_{{\textsc{e}}{\textsc{m}} ij}:=-\widehat F_{ik}\widehat F_j{}^k
  +\frac14\widehat g_{ij}\widehat F^2
\end{equation}
is the electromagnetic field energy-momentum tensor. Variation of
the action with respect to electromagnetic field yields Maxwell's equations:
\begin{equation}                                                  \label{uncmhy}
  \partial_j\big(\sqrt{|\widehat g|}\widehat F^{ji}\big),
\end{equation}
where
\begin{equation*}
  \widehat g=k^2m^2gh,\qquad g:=\det g_{\alpha\beta},\qquad h:=\det h_{\mu\nu}.
\end{equation*}

To simplify the problem, we assume that the four-dimensional electromagnetic
potential consists of two parts:
\begin{equation*}
  \widehat A_i=\big(A_\alpha(x),A_\mu(y)\big),
\end{equation*}
where $A_\alpha(x)$ and $A_\mu(y)$ are two-dimensional electromagnetic potentials
on surfaces ${\mathbb U}$ and ${\mathbb V}$, respectively. Then the electromagnetic field
strength becomes block diagonal:
\begin{equation}                                                  \label{unfhgt}
  \widehat F_{ij}=\begin{pmatrix}F_{\alpha\beta} & 0 \\ 0 & F_{\mu\nu} \end{pmatrix},
\end{equation}
where
\begin{equation*}
  F_{\alpha\beta}(x):=\partial_\alpha A_\beta-\partial_\beta A_\alpha,\qquad
  F_{\mu\nu}(y):=\partial_\mu A_\nu-\partial_\nu A_\mu
\end{equation*}
are strength components for two-dimensional electromagnetic potentials.

In what follows, the raising of Greek indices from the beginning and middle
of the Greek alphabet is performed by using the inverse metrics $g^{\alpha\beta}$ and
 $h^{\mu\nu}$. Therefore
\begin{equation*}
  \widehat F^{\alpha\beta}=\frac1{k^2}F^{\alpha\beta},\qquad
  \widehat F^{\mu\nu}=\frac1{m^2}F^{\mu\nu},
\end{equation*}
where $k(y)$ and $m(x)$ are dilaton fields entering four-dimensional metric
(\ref{egdtrf}). The square of four-dimensional electromagnetic field strength is
\begin{equation*}
  \widehat F^2=\frac1{k^2}F_{\alpha\beta}F^{\alpha\beta}+\frac1{m^2}F_{\mu\nu}F^{\mu\nu}.
\end{equation*}

In the case under consideration, Maxwell's Eqs.~(\ref{uncmhy}) for $i=\alpha$ lead
to equality
\begin{equation*}
  \frac1{|k|}\sqrt{|h|}\partial_\beta\left(|m|\sqrt{|g|}F^{\beta\alpha}\right)=0.
\end{equation*}
A general solution to these equations has the form
\begin{equation}                                                  \label{uncbfh}
  |m|\sqrt{|g|}F^{\alpha\beta}=2\hat\varepsilon^{\alpha\beta}Q,\qquad Q={\sf\,const},
\end{equation}
where $\hat\varepsilon^{\alpha\beta}$ is the totally antisymmetric second rank tensor
density. The factor 2 is introduced in the right hand side of general solution
for simplification of subsequent formulae. This solution is rewritten as
\begin{equation}                                                  \label{ubvcfr}
  F^{\alpha\beta}=\frac{2Q}{|m|}\varepsilon^{\alpha\beta},
\end{equation}
where $\varepsilon^{\alpha\beta}:=\hat\varepsilon^{\alpha\beta}/\sqrt{|g|}$ is now the totally
antisymmetric second rank tensor.

If $i=\mu$, then Maxwell's Eqs.~(\ref{uncmhy}) yield the equality
\begin{equation*}
  \frac1{|m|}\sqrt{|g|}\partial_\mu\left(|k|\sqrt h F^{\mu\nu}\right)=0.
\end{equation*}
Its general solution is
\begin{equation}                                                  \label{unbgtm}
  F^{\mu\nu}=\frac{2P}{|k|}\varepsilon^{\mu\nu},\qquad P={\sf\,const}.
\end{equation}

Now the four-dimensional electromagnetic energy-momentum tensor (\ref{ubnhgy})
is easily calculated. It is block diagonal:
\begin{equation}                                                  \label{uvbxfp}
  \widehat T_{ij}=\begin{pmatrix}
  \widehat T_{\alpha\beta} & 0 \\ 0 & \widehat T_{\mu\nu} \end{pmatrix},
\end{equation}
where
\begin{equation*}
  \widehat T_{\alpha\beta}=\frac{2g_{\alpha\beta}}{km^2}(Q^2+P^2),\qquad
  \widehat T_{\mu\nu}=-\frac{2h_{\mu\nu}}{k^2m}(Q^2+P^2).
\end{equation*}

Note that we do not need the electromagnetic potentials $A_\alpha$ and $A_\mu$ for
the calculation of the energy-momentum tensor. It is sufficient to know
strengthes (\ref{ubvcfr}) and (\ref{unbgtm}).

Now we have to solve Einstein's Eqs.~(\ref{ubsvgr}) with right hand side
(\ref{uvbxfp}). Since energy-momentum tensor depends only on the sum $Q^2+P^2$,
we set $P=0$ to simplify formulae. In the final answer, this constant is easily
reconstructed by substitution $Q^2\mapsto Q^2+P^2$.

In what follows, we consider only the case $Q\ne0$, because the case $Q=0$ was
considered in \cite{KaKlKu99} in full detail.
\section{Einstein's equations}
The right hand side of Einstein's Eqs.~(\ref{ubsvgr}) is defined by general
solution of Maxwell's equations, which leads to electromagnetic energy-momentum
tensor (\ref{uvbxfp}). The trace of Einstein's equations can be easily solved
with respect to the scalar curvature:
\begin{equation*}
  \widehat R=4\Lambda,
\end{equation*}
which does not depend of the electromagnetic field, because the trace of the
electromagnetic field energy-momentum tensor equals zero. After elimination of
the scalar curvature, Einstein's equations are simplified:
\begin{equation}                                                  \label{ubbdgt}
  \widehat R_{ij}-\widehat g_{ij}\Lambda=-\frac12\widehat T_{{\textsc{e}}{\textsc{m}} ij}.
\end{equation}
For indices values $(ij)=(\alpha,\beta)$, $(\mu\nu)$, and $(\alpha,\mu)$, these
equations yield the following system of equations:
\begin{align}                                                     \label{ubvcfl}
  R_{\alpha\beta}+\frac{\nabla_\alpha\nabla_\beta m}m-\frac{\nabla_\alpha m\nabla_\beta m}{2m^2}
  +g_{\alpha\beta}\left(\frac{\nabla^2k}{2m}-k\Lambda+\frac{Q^2}{m^2k}\right)=&0,
\\                                                                \label{ubncjh}
  R_{\mu\nu}+\frac{\nabla_\mu\nabla_\nu k}k-\frac{\nabla_\mu k\nabla_\nu k}{2k^2}
  +h_{\mu\nu}\left(\frac{\nabla^2 m}{2k}-m\Lambda-\frac{Q^2}{k^2m}\right)=&0,
\\
  -\frac{\nabla_\alpha m\nabla_\mu k}{2mk}=&0,
\end{align}
where $R_{\alpha\beta}$ and $R_{\mu\nu}$ are Ricci tensors for two-dimensional
metrics $g_{\alpha\beta}$ and $h_{\mu\nu}$, respectively, $\nabla_\alpha$ and $\nabla_\mu$ are
two-dimensional covariant derivatives with Christoffel's symbols on surfaces
${\mathbb U}$ and ${\mathbb V}$, $\nabla^2:=g^{\alpha\beta}\nabla_\alpha\nabla_\beta$ or
$\nabla^2:=h^{\mu\nu}\nabla_\mu\nabla_\nu$, which is clear from the context. Sure, the
equalities $\nabla_\alpha m=\partial_\alpha m$ and $\nabla_\mu k=\partial_\mu k$ hold. But we keep
the symbol of covariant derivative for uniformity.

For subsequent analysis of Einstein's equations, we extract the traces and
traceless parts from Eqs.~\ (\ref{ubvcfl}) and (\ref{ubncjh}). Then the full
system of Einstein's equations takes the form
\begin{align}                                                     \label{unncbg}
  \nabla_\alpha\nabla_\beta m-\frac{\nabla_\alpha m\nabla_\beta m}{2m}-\frac12\left(\nabla^2 m
  -\frac{(\nabla m)^2}{2m}\right)=&0,
\\                                                                \label{ubmsdi}
  \nabla_\mu\nabla_\nu k-\frac{\nabla_\mu k\nabla_\nu k}{2k}-\frac12\left(\nabla^2 k
  -\frac{(\nabla k)^2}{2k}\right)=&0,
\\                                                                \label{undbyt}
  R^g+\frac{\nabla^2 m}m-\frac{(\nabla m)^2}{2m^2}+\frac{\nabla^2 k}m-2k\Lambda
  +\frac{2Q^2}{m^2k}=&0,
\\                                                                \label{undhtt}
  R^h+\frac{\nabla^2 k}k-\frac{(\nabla k)^2}{2k^2}+\frac{\nabla^2 m}k-2m\Lambda
  -\frac{2Q^2}{k^2m}=&0,
\\                                                                \label{ubvfds}
  \nabla_\alpha m\nabla_\beta k=&0,
\end{align}
where $(\nabla m)^2:=g^{\alpha\beta}\nabla_\alpha m\nabla_\beta m$,
$(\nabla k)^2:=g^{\mu\nu}\nabla_\mu k\nabla_\nu k$, $R^g$ and $R^h$ are scalar curvatures
of two-dimensional surfaces ${\mathbb U}$ and ${\mathbb V}$ for metrics $g$ and $h$,
respectively. In the above formulae, we used equalities
$R_{\alpha\beta}=\frac12g_{\alpha\beta}R^g$ and $R_{\mu\nu}=\frac12h_{\mu\nu}R^h$ valid in
two dimensions.

The last Eq.~(\ref{ubvfds}), which corresponds to mixed values of indices
$(ij)=(\alpha\mu)$ in Einstein's equations results in strong restrictions on
solutions. Namely, as in the case without electromagnetic field, there are only
three cases:
\begin{equation}                                                  \label{ecasek}
\begin{array}{lrr}
  {\sf A}:  & \qquad k={\sf\,const}\ne0,      & \qquad m={\sf\,const}\ne0, \\
  {\sf B}:  & k={\sf\,const}\ne0,      & \nabla_\alpha m\ne0, \\
  {\sf C}:  & \nabla_\mu k\ne0,  & m={\sf\,const}\ne0.
\end{array}
\end{equation}
We shall see in what follows, that this leads to ``spontaneous symmetry
emergence''.

Now we consider the first two cases in detail.
\section{Product of constant curvature surfaces}
The most symmetric solutions of Einstein's equations with electromagnetic field
in the form of the product of two constant curvature surfaces arise in case
{\sf A} (\ref{ecasek}), when both dilaton fields are constant. If $k$ and $m$
are constant, then Eqs.~(\ref{unncbg}) and (\ref{ubmsdi}) are identically
satisfied, and Eqs.~(\ref{undbyt}) and (\ref{undhtt}) take the form
\begin{equation}                                                  \label{ubbcnd}
  R^q=2k\Lambda-\frac{2Q^2}{m^2k}=-2K^g,\qquad R^h=2m\Lambda+\frac{2Q^2}{k^2m}
  =-2K^h,
\end{equation}
where
\begin{equation*}
  K^g:=-k\left(\Lambda-\frac{Q^2}{k^2m^2}\right),\qquad
  K^h:=-m\left(\Lambda+\frac{Q^2}{k^2m^2}\right)
\end{equation*}
are Gaussian curvatures of surfaces ${\mathbb U}$ and ${\mathbb V}$, respectively. It means that
both surfaces are of constant curvature in case {\sf A}. The metric on each
surface is invariant under three-dimensional transformation group.

In stereographic coordinates on both surfaces, the metric of four-dimensional
space-time takes the form
\begin{equation}                                                  \label{ubvxgh}
\begin{split}
  ds^2=&k g_{\alpha\beta}dx^\alpha dx^\beta+mh_{\mu\nu}dy^\mu dy^\nu=
\\[6pt]
  =&k\frac {dt^2-dx^2}{\big[1+\frac{K^g}4(t^2-x^2)\big]^2}
  +m\frac{dy^2+dz^2}{\big[1+\frac{K^h}4(y^2+z^2)\big]^2},
\end{split}
\end{equation}
where $(x^\alpha):=(t,x)$ and $(y^\mu):=(y,z)$.

We can put $k=\pm1$ and $m=\pm1$ by rescaling coordinates. One has also to
redefine the constant of integration $Q^2/(k^2m^2)\mapsto Q^2$. We choose $k=1$
and $m=-1$ for the metric signature to be $(+---)$. Then the Gaussian curvatures
are
\begin{equation}                                                  \label{umnkiu}
  K^g=Q^2-\Lambda,\qquad K^h=Q^2+\Lambda.
\end{equation}
There are four qualitatively different cases for topologically inequivalent
global solutions depending on relations between cosmological constant and
charge:
\begin{equation}                                                  \label{ubsndi}
\begin{aligned}
  \Lambda<-Q^2: & \qquad K^g>0, & K^h<0, & \qquad {\mathbb M}={\mathbb L}^2\times{\mathbb H}^2,
\\
  \Lambda=-Q^2: & \qquad K^g>0, & K^h=0, & \qquad {\mathbb M}={\mathbb L}^2\times{\mathbb R}^2,
\\
  -Q^2<\Lambda<Q^2: & \qquad K^g>0, & K^h>0, & \qquad {\mathbb M}={\mathbb L}^2\times{\mathbb S}^2,
\\
  \Lambda=Q^2: & \qquad K^g=0, & K^h>0, & \qquad {\mathbb M}={\mathbb R}^{1,1}\times{\mathbb S}^2,
\\
  \Lambda>Q^2: & \qquad K^g<0, & K^h>0, & \qquad {\mathbb M}={\mathbb L}^2\times{\mathbb S}^2,
\end{aligned}
\end{equation}
where ${\mathbb L}^2$ is the one sheet hyperboloid (more precisely, its universal
covering) embedded in three-dimensional Minkowskian space ${\mathbb R}^{1,2}$,
${\mathbb H}^2$ is the Lobachevsky plane (the upper sheet of two-sheeted hyperboloid
embedded in ${\mathbb R}^{1,2}$), and ${\mathbb S}^2$ is the two-dimensional sphere. From
topological point of view the third and fifth cases in Eq.~(\ref{ubsndi})
coincide. Therefor there are only four topologically inequivalent global
solutions of Einstein's equations in the form of direct product of two constant
curvature surfaces. Note that for $Q=0$, there are only three topologically
inequivalent solutions \cite{KaKlKu99}.

All solutions have exactly six Killing vector fields and belong to type $D$ in
Petrov's classification.

The cases of other signatures of four-dimensional metric for $k=\pm1$ and
$m=\pm1$ are analysed similarly. Qualitative properties of global solutions are
the same.

We see that symmetry properties in this case are not imposed from the very
beginning but arise as the result of solution of equations of motion. This
effect is called ``spontaneous symmetry emergence''.
\section{Solutions with spatial symmetry}
The dilaton field $k$ is constant in second case {\sf B} (\ref{ecasek}).
Without loss of generality, we put $k=1$. Then Einstein's equations
(\ref{unncbg})--(\ref{ubvfds}) take the form
\begin{align}                                                     \label{ubvhhj}
  \nabla_\alpha\nabla_\beta m-\frac{\nabla_\alpha m\nabla_\beta m}{2m}
  -\frac12g_{\alpha\beta}\left[\nabla^2 m-\frac{(\nabla m)^2}{2m}\right]=&0,
\\                                                                \label{undmwi}
  R^h+\nabla^2 m-2m\Lambda-\frac{2Q^2}m=&0,
\\                                                                \label{usneoo}
  R^g+\frac{\nabla^2 m}m-\frac{(\nabla m)^2}{2m^2}-2\Lambda+\frac{2Q^2}{m^2}=&0.
\end{align}

Consider Eq.~(\ref{undmwi}). The scalar curvature $R^h$ depends on coordinates
$y$ on surface ${\mathbb V}$, whereas all other terms depend on coordinates $x$ on
surface ${\mathbb U}$. For this equation to be fulfilled, it is necessary that
equation $R^h={\sf\,const}$ holds. It means that surface ${\mathbb V}$ must be of constant
curvature as the consequence of Einstein equations. Therefor the
four-dimensional metric of space-time has at least three independent Killing
vector fields. So, there is spontaneous symmetry emergence.

Let us put $R^h:=-2K^h={\sf\,const}$. Then Eq.~(\ref{undmwi}) is
\begin{equation}                                                  \label{unnvbh}
  \nabla^2m-2m\Lambda-2K^h-\frac{2Q^2}m=0.
\end{equation}

Excluding the case {\sf A} considered in the previous section, we proceed
further assuming $\nabla_\alpha m\ne0$ on the whole ${\mathbb U}$.
\begin{prop}                                                      \label{pkfgju}
Equation (\ref{unnvbh}) is the first integral of Eqs.~(\ref{ubvhhj}) and
(\ref{usneoo}).
\end{prop}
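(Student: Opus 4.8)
The plan is to show that the scalar equation \eqref{unnvbh} is a consequence of the traceless equation \eqref{ubvhhj} and the second trace equation \eqref{usneoo}, in the sense that it is conserved along $\mathbb{U}$: if \eqref{ubvhhj} and \eqref{usneoo} hold, then the left-hand side of \eqref{unnvbh}, call it
\begin{equation*}
  \Phi := \nabla^2 m - 2m\Lambda - 2K^h - \frac{2Q^2}{m},
\end{equation*}
has vanishing gradient $\nabla_\alpha\Phi = 0$ on $\mathbb{U}$, so that $\Phi = \text{const}$, and the constant can then be absorbed or shown to vanish. The first thing I would do is extract from the traceless equation \eqref{ubvhhj} a usable identity: contracting \eqref{ubvhhj} with $\nabla^\beta m$ (the only available covector, since $\nabla_\alpha m \ne 0$ everywhere on $\mathbb{U}$) gives an expression for $\nabla^\beta m\,\nabla_\alpha\nabla_\beta m = \tfrac12\nabla_\alpha(\nabla m)^2$ in terms of $(\nabla m)^2$, $\nabla^2 m$, and $\nabla_\alpha m$. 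Because $\mathbb{U}$ is two-dimensional and $\nabla_\alpha m$ is nowhere zero, the symmetric tensor $\nabla_\alpha\nabla_\beta m$ is in fact completely determined by \eqref{ubvhhj} together with its $\nabla^\beta m$-contraction: the traceless part is pinned down by \eqref{ubvhhj}, and the trace is $\nabla^2 m$; so I can write $\nabla_\alpha\nabla_\beta m$ explicitly as a combination of $g_{\alpha\beta}$ and $\nabla_\alpha m\,\nabla_\beta m$ with coefficients built from $\nabla^2 m$, $(\nabla m)^2$, $m$.

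Next I would compute $\nabla_\alpha\Phi$ directly. The term $\nabla_\alpha(\nabla^2 m)$ is the obstruction: commuting covariant derivatives on $\mathbb{U}$ produces a curvature term, $\nabla_\alpha\nabla^2 m = \nabla^2\nabla_\alpha m - R^g{}_\alpha{}^\beta \nabla_\beta m + (\text{terms from }\nabla\nabla m)$, and in two dimensions $R^g{}_{\alpha\beta} = \tfrac12 g_{\alpha\beta} R^g$, so the Ricci term is $-\tfrac12 R^g\,\nabla_\alpha m$. Here \eqref{usneoo} enters: it lets me substitute for $R^g$ in terms of $\nabla^2 m/m$, $(\nabla m)^2/m^2$, $\Lambda$, $Q^2/m^2$. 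The remaining $\nabla^2\nabla_\alpha m = \nabla^\beta\nabla_\beta\nabla_\alpha m$ I reduce using the explicit formula for $\nabla_\alpha\nabla_\beta m$ obtained in the first step, differentiating it once more and contracting. The terms $\nabla_\alpha(-2m\Lambda) = -2\Lambda\nabla_\alpha m$ and $\nabla_\alpha(-2Q^2/m) = (2Q^2/m^2)\nabla_\alpha m$ are immediate, and $\nabla_\alpha(-2K^h) = 0$ since $K^h$ is constant.

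Assembling everything, all terms in $\nabla_\alpha\Phi$ should be proportional to $\nabla_\alpha m$ (this is forced: in two dimensions with $\nabla_\alpha m \ne 0$, every covector built covariantly from $m$ and $g$ is a multiple of $\nabla_\alpha m$), and the scalar coefficient should collapse to zero after the substitutions from \eqref{ubvhhj} and \eqref{usneoo}. The main obstacle I anticipate is precisely this bookkeeping: correctly handling the second-derivative-of-$\nabla\nabla m$ terms and the curvature commutator so that the $(\nabla m)^2$-dependent and $\nabla^2 m$-dependent pieces cancel in pairs. A cleaner alternative, which I would try first, is to avoid third derivatives altogether by noting that \eqref{ubvhhj} is equivalent to the statement that $m$ satisfies a Hamilton--Jacobi-type relation making $\nabla_\alpha m$ proportional to a gradient of a function of $m$ alone; concretely one can often integrate \eqref{ubvhhj} to get $(\nabla m)^2 = W(m)$ for some function $W$, reducing \eqref{usneoo} and the candidate first integral \eqref{unnvbh} to ODEs in $m$, whereupon \eqref{unnvbh} follows from \eqref{usneoo} by a single differentiation in $m$. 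Either route closes the proof; the ODE reduction is the one I would write up.
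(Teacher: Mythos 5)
Your first route is, in substance, the paper's own proof: the paper differentiates Eq.~(\ref{unnvbh}), commutes derivatives with $[\nabla_\alpha,\nabla_\beta]A_\gamma=-R^g_{\alpha\beta\gamma}{}^\delta A_\delta$ (so $R^g$ enters via $R_{\alpha\beta}=\tfrac12 g_{\alpha\beta}R^g$), eliminates the Hessian $\nabla_\alpha\nabla_\beta m$ by Eq.~(\ref{ubvhhj}), and finds that everything collapses onto $\nabla_\alpha m$ times the left-hand side of Eq.~(\ref{usneoo}); the paper obtains the coefficient $\tfrac12$ because it also substitutes $\nabla^2 m$ from Eq.~(\ref{undmwi}), while if you instead solve for the recurring $\tfrac12\nabla_\alpha\nabla^2 m$ term the coefficient is $1$ --- either way the identity closes, and since $\nabla_\alpha m\ne0$ both directions of the equivalence follow. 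Two small corrections to your write-up of this route: the fact that every term ends up proportional to $\nabla_\alpha m$ is not a general two-dimensional principle (e.g.\ $\varepsilon_\alpha{}^\beta\nabla_\beta m$ is a covector built from $m$ and $g$ that is not a multiple of $\nabla_\alpha m$); it is an output of the substitutions. And the integration constant is neither ``absorbed'' nor shown to vanish: it is precisely $2K^h$, already fixed by Eq.~(\ref{undmwi}) through $R^h=-2K^h$, and the operationally relevant direction, used immediately after the proposition, is that Eqs.~(\ref{ubvhhj}) and (\ref{unnvbh}) imply Eq.~(\ref{usneoo}).

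The route you actually propose to write up --- the ODE reduction --- has a gap as sketched. From Eq.~(\ref{ubvhhj}) you correctly get $\nabla_\alpha(\nabla m)^2\propto\nabla_\alpha m$, hence locally $(\nabla m)^2=W(m)$ and then $\nabla^2 m=W'-W/(2m)$, so Eq.~(\ref{unnvbh}) indeed becomes an ODE in $m$. But Eq.~(\ref{usneoo}) contains $R^g$, which a priori is an independent function on ${\mathbb U}$; reducing Eq.~(\ref{usneoo}) to an ODE in $m$ requires first proving that the curvature is itself determined by $W$, namely $R^g=W''-\tfrac{3W'}{2m}+\tfrac{3W}{2m^2}$ (which one can check reproduces Eq.~(\ref{ubjdgh}) on the explicit solution), and that relation follows only from the integrability (Ricci) identity applied to the Hessian dictated by Eq.~(\ref{ubvhhj}) --- i.e.\ exactly the commutator computation the shortcut was meant to avoid. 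With that extra step the ODE argument does close, since then $\frac{d}{dm}$ of the left-hand side of Eq.~(\ref{unnvbh}) equals the left-hand side of Eq.~(\ref{usneoo}); without it the claimed ``single differentiation in $m$'' is unjustified. So either write up your first route, which matches the paper, or add the curvature--$W$ relation to the second.
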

\begin{proof}
Differentiate Eq.~(\ref{unnvbh}) and use the equality
\begin{equation*}
  [\nabla_\alpha,\nabla_\beta]A_\gamma=-R^g_{\alpha\beta\gamma}{}^\delta A_\delta,
\end{equation*}
valid for any covector field $A_\alpha$, to change the order of
derivatives in the first term:
\begin{multline*}
  \nabla_\alpha(\ref{unnvbh})=\frac{\nabla^\beta m\nabla_\alpha m\nabla_\beta m}{2m}
  +\frac{\nabla_\alpha m\nabla^2m}{2m}-\frac{\nabla_\alpha m(\nabla m)^2}{2m^2}+\\
\\
  +\frac12\nabla_\alpha\left(\nabla^2 m-\frac{(\nabla m)^2}{2m}\right)
  +\frac12\nabla_\alpha m R^g-2\nabla_\alpha m\Lambda+\nabla_\alpha m\frac{2Q^2}{m^2}.
\end{multline*}
Now exclude derivatives $\nabla^\beta m\nabla_\alpha m$ and $\nabla^2 m$ using
Eqs.~(\ref{ubvhhj}) and (\ref{undmwi}) in the first and fourth terms on the
right hand side. After rearranging terms, the sum of the first and fourth terms
takes the form
\begin{equation*}
  \nabla_\alpha m\left(\frac{(\nabla m)^2}{4m^2}+\Lambda-\frac{Q^2}{m^2}\right).
\end{equation*}
Taking all terms together, we obtain
\begin{equation}                                                  \label{uncbvg}
  \nabla_\alpha(\ref{unnvbh})=\frac12\nabla_\alpha m(\ref{usneoo}).
\end{equation}
Since $\nabla_\alpha m\ne0$, it implies the statement of the proposition.
\end{proof}

The proof of the proposition implies that it is sufficient to solve
Eqs.~(\ref{ubvhhj}) and (\ref{unnvbh}), Eq.~(\ref{usneoo}) being satisfied
automatically.

To solve Eqs.~(\ref{ubvhhj}) and (\ref{unnvbh}) explicitly, we fix the
conformal gauge for metric $g_{\alpha\beta}$ on Lorentzian surface ${\mathbb U}$:
\begin{equation}                                                  \label{endhft}
  g_{\alpha\beta}dx^\alpha dx^\beta=\Phi d\xi d\eta,
\end{equation}
where $\Phi(\xi,\eta)\ne0$ is the conformal factor depending on light cone
coordinates $\xi:=\tau+\sigma$, $\eta:=\tau-\sigma$ on ${\mathbb U}$. The respective four
dimensional metric is
\begin{equation}                                                  \label{unjhsy}
  ds^2=\Phi d\xi d\eta+md\Omega,
\end{equation}
where $d\Omega$ is the metric on the Riemannian surface of constant curvature
${\mathbb V}={\mathbb S}^2$, ${\mathbb R}^2$, or ${\mathbb H}^2$. The sign of the conformal factor $\Phi$ is not
fixed for the present.

For $\Phi>0$ and $m<0$ the signature of metric (\ref{unjhsy}) is $(+---)$. If we
change the sign of $m$, the signature of the metric becomes $(+-++)$. The same
transformation of the signature can be achieved by changing the overall sign of
the metric $\widehat g_{ij}\mapsto-\widehat g_{ij}$, and interchanging the
first two coordinates, $\tau\leftrightarrow\sigma$. Einstein's equations with
cosmological constant and electromagnetic field (\ref{ubbdgt}) are not invariant
with respect to these transformations with simultaneous changing the sign of the
cosmological constant, because the right hand side changes its sign. Therefor,
for $\Phi>0$, we have to consider two cases:
\begin{equation*}
  m<0\quad\Leftrightarrow\quad{\sf\,sign\,}\widehat g_{ij}=(+---)\qquad\text{и}\qquad
  m>0\quad\Leftrightarrow\quad{\sf\,sign\,}\widehat g_{ij}=(-+++).
\end{equation*}
This is the difference for Einstein's equations without electromagnetic field
considered in \cite{KaKlKu99}.
\subsubsection{Metric signature $(+---)$}
For $\Phi>0$ and $m<0$, we introduce convenient parameterization
\begin{equation}                                                  \label{ubndoi}
  m:=-q^2,\qquad q(\xi,\eta)>0.
\end{equation}
Afterwards, we obtain the full system of equations:
\begin{align}                                                     \label{egcpga}
  -\partial^2_{\xi\xi}q+\frac{\partial_\xi\Phi\partial_\xi q}\Phi&=0,
\\                                                                \label{egcygb}
  -\partial^2_{\eta\eta}q+\frac{\partial_\eta\Phi\partial_\eta q}\Phi&=0,
\\                                                                \label{egcugc}
  -2\frac{\partial^2_{\xi\eta}q^2}\Phi-K^h+\Lambda q^2+\frac{Q^2}{q^2}&=0.
\end{align}
The first two equations which do not depend on the electromagnetic field imply
the following assertion.
\begin{prop}
If $\partial_\xi q\partial_\eta q>0$, then the function $q(\tau)$ depends only on
timelike coordinate $\tau:=\frac12(\xi+\eta)$. If $\partial_\xi q\partial_\eta q<0$, then
the function $q(\sigma)$ depends only on spacelike coordinate
$\sigma:=\frac12(\xi-\eta)$. And the following equality holds
\begin{equation}                                                  \label{unbcgt}
  |\Phi|=|q'|,
\end{equation}
where prime denotes differentiation on the argument (either $\tau$, or $\sigma$).
\end{prop}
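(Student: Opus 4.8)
The plan is to analyze Eqs.~(\ref{egcpga}) and (\ref{egcygb}) directly, since these two equations do not involve the electromagnetic field and encode the essential rigidity. Rewrite Eq.~(\ref{egcpga}) as $\partial_\xi\log|\partial_\xi q|=\partial_\xi\log|\Phi|$ wherever $\partial_\xi q\neq0$, which integrates to $\partial_\xi q=\Phi\,f(\eta)$ for some function $f$ of $\eta$ alone; symmetrically, Eq.~(\ref{egcygb}) gives $\partial_\eta q=\Phi\,g(\xi)$ for some function $g$ of $\xi$ alone. First I would combine these: from the two expressions, $\partial_\xi q/\partial_\eta q=f(\eta)/g(\xi)$, so $g(\xi)\partial_\xi q=f(\eta)\partial_\eta q$. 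Differentiating this relation once more and using the mixed-partial consistency $\partial_\eta\partial_\xi q=\partial_\xi\partial_\eta q$ will pin down $f$ and $g$ up to constants. Concretely, I expect to find that $\partial_\xi q$ and $\partial_\eta q$ are, up to a common factor, functions depending on $\tau$ or $\sigma$ only.

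Next I would split into the two cases according to the sign of $\partial_\xi q\,\partial_\eta q$. If the product is positive, the claim is that $q$ depends only on $\tau=\tfrac12(\xi+\eta)$; equivalently $\partial_\sigma q=\partial_\xi q-\partial_\eta q$ should vanish, or rather $q$ should be constant along lines $\tau=\text{const}$. I would argue this by showing $\partial_\xi q$ and $\partial_\eta q$ must in fact be equal (after a coordinate scaling of $\xi$ and $\eta$ individually, which is still allowed within the conformal gauge), using the integrated relations above together with the fact that the sign of the product is fixed. The key point is that $g(\xi)\partial_\xi q=f(\eta)\partial_\eta q$ with both sides of one sign forces the common value to be, after reparameterization, a function of $\tau$ alone. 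If the product $\partial_\xi q\,\partial_\eta q<0$, the analogous manipulation yields dependence on $\sigma=\tfrac12(\xi-\eta)$ instead, since then the natural invariant combination flips sign.

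Once $q=q(\tau)$ (say, in the timelike case), I would substitute back into Eq.~(\ref{egcpga}): with $q$ a function of $\tau$ only, $\partial_\xi q=\tfrac12 q'$ and $\partial^2_{\xi\xi}q=\tfrac14 q''$, so the equation becomes $-\tfrac14 q''+\tfrac{\partial_\xi\Phi}{\Phi}\cdot\tfrac12 q'=0$, i.e.\ $\partial_\xi\log|\Phi|=\tfrac12 q''/q'=\tfrac12\partial_\xi\log|q'|$ after noting $\partial_\xi q'=\tfrac12 q''$; comparing with Eq.~(\ref{egcygb}) treated the same way gives $\partial_\eta\log|\Phi|=\tfrac12\partial_\eta\log|q'|$ as well, hence $\log|\Phi|=\tfrac12\log|q'|^2+\text{const}$, so $|\Phi|=c\,|q'|$ for a positive constant $c$. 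The remaining conformal freedom $\tau\mapsto a\tau$ (a constant rescaling of both $\xi$ and $\eta$) can be used to absorb $c$, giving $|\Phi|=|q'|$ as stated; the spacelike case is identical with $\sigma$ replacing $\tau$.

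The main obstacle I anticipate is the bookkeeping of which residual coordinate transformations remain available after fixing the conformal gauge (\ref{endhft}): the gauge is preserved by $\xi\mapsto\tilde\xi(\xi)$, $\eta\mapsto\tilde\eta(\eta)$, and one must be careful that the normalization $|\Phi|=|q'|$ genuinely follows rather than merely $|\Phi|\propto|q'|$. I would also need to handle the loci where $\partial_\xi q$ or $\partial_\eta q$ vanishes with some care — either arguing by continuity from the open dense set where they are nonzero, or noting that the hypothesis $\partial_\xi q\,\partial_\eta q\neq0$ is assumed to hold on the region of interest. The rest is routine integration of first-order ODEs.
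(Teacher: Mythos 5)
Your sketch is correct, and it is essentially the argument given in the sources the paper itself relies on: the paper does not prove this proposition but refers to \cite{KaKlKu99,Katana13B}, where the proof runs exactly along your lines — integrate Eqs.~(\ref{egcpga}), (\ref{egcygb}) to $\partial_\xi q=\Phi f(\eta)$, $\partial_\eta q=\Phi g(\xi)$, then use the residual freedom $\xi\mapsto\tilde\xi(\xi)$, $\eta\mapsto\tilde\eta(\eta)$ of the conformal gauge (\ref{endhft}) (e.g.\ $d\tilde\xi=d\xi/|g|$, $d\tilde\eta=d\eta/|f|$, keeping both reparameterizations orientation-preserving so that the timelike/spacelike distinction is not scrambled) to get $\partial_{\tilde\xi}q=\pm\partial_{\tilde\eta}q$ according to the sign of $\partial_\xi q\,\partial_\eta q$, hence $q=q(\tau)$ or $q=q(\sigma)$, and finally a constant rescaling to remove the proportionality constant in $|\Phi|=c|q'|$; this matches the paper's own qualification that the statement holds ``up to conformal transformations''. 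Two small slips in your write-up, neither fatal: (i) the mixed-partial consistency does not by itself ``pin down $f$ and $g$ up to constants'', and in the original coordinates $\partial_\xi q$, $\partial_\eta q$ are not functions of $\tau$ or $\sigma$ alone up to a common factor — that one-variable dependence appears only after the reparameterization, which your second paragraph in fact performs, so the logic goes through; (ii) in the back-substitution the correct chain is $\partial_\xi\log|\Phi|=\tfrac12 q''/q'=\partial_\xi\log|q'|$ (not $\tfrac12\partial_\xi\log|q'|$), and combining the $\xi$- and $\eta$-equations gives $\partial_\sigma\log|\Phi|=0$ and $\log|\Phi|=\log|q'|+{\sf\,const}$; your final displayed relation is equivalent to this, so the conclusion $|\Phi|=c\,|q'|$ and its normalization to (\ref{unbcgt}) stand.
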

This proposition provides a general solution to equations (\ref{egcpga}) and
(\ref{egcygb}) up to conformal transformations. This statement is proved in
\cite{KaKlKu99,Katana13B}.

Thus, we can always choose coordinates in such a way that $q$ and $\Phi$
depend simultaneously on timelike or spacelike coordinate
\begin{equation}                                                  \label{eindvj}
  \zeta:=\frac12(\xi\pm\eta)=:\begin{cases} \tau,\qquad \partial_\xi q\,\partial_\eta q>0,
  \\ \sigma,\qquad \partial_\xi q\,\partial_\eta q<0. \end{cases}
\end{equation}
It means that two-dimensional metric (\ref{endhft}) and consequently
four-dimensional metric (\ref{unjhsy}) have the Killing vector $\partial_\sigma$ or
$\partial_\tau$, as the consequence of equations (\ref{egcpga}) and (\ref{egcygb}).
We call these solutions homogeneous and static, respectively, though it is
related to the fixed coordinate system. The existence of additional Killing
vector is the generalization of Birkhoff's theorem \cite{Birkho23} stating that
arbitrary spherically symmetric solution of vacuum Einstein's equations must be
static. (This statement was previously published in \cite{Jebsen21}.)
The generalization includes the addition of electromagnetic field, and, in
addition, the existence of extra Killing vector is proved not only for
spherically symmetric solution $(K^h=1)$, but also for solutions invariant with
respect to ${\mathbb I}{\mathbb S}{\mathbb O}(2)$ $(K^h=0)$ and ${\mathbb S}{\mathbb O}(1,2)$ $(K^h=-1)$ transformation
groups.

We are left to solve equation (\ref{egcugc}). In static, $q=q(\sigma)$, and
homogeneous, $q=q(\tau)$, cases, equation (\ref{egcugc}) takes the form
\begin{align}                                                     \label{ufijhg}
  (q^2)''&=~~2\left(K^h-\Lambda q^2-\frac{Q^2}{q^2}\right)\Phi, & q&=q(\sigma),
\\                                                                \label{usedrf}
  (q^2)''&=-2\left(K^h-\Lambda q^2-\frac{Q^2}{q^2}\right)\Phi, & q&=q(\tau).
\end{align}
To integrate the derived equations, one has to express $\Phi$ through $q$ using
equation (\ref{unbcgt}) and removing the modulus sign.

We consider the static case $q=q(\sigma)$, $\Phi>0$ and $q'>0$ in detail. Then
Eq.~(\ref{ufijhg}) together with Eq.~(\ref{unbcgt}) reduces to
\begin{equation*}
  (q^2)''=2\left(K^h-\Lambda q^2-\frac{Q^2}{q^2}\right)q'.
\end{equation*}
It can be easily integrated:
\begin{equation*}
  (q^2)'=2\left(K^hq-\frac{\Lambda q^3}3-2M+\frac{Q^2}q\right),
\end{equation*}
where $M={\sf\,const}$ is an integration constant, which coincides with mass in the
Schwarzschild solution. Differentiating the left hand side and dividing it by
$2q>0$, we obtain equation
\begin{equation*}
  q'=K^h-\frac{2M}q+\frac{Q^2}{q^2}-\frac{\Lambda q^2}3.
\end{equation*}
Since $q'=\Phi$ in the case under consideration, it implies expression for the
conformal factor through variable $q$:
\begin{equation}                                                  \label{ucohjy}
  \Phi(q)=K^h-\frac{2M}q+\frac{Q^2}{q^2}-\frac{\Lambda q^2}3.
\end{equation}

If $q=q(\sigma)$, $\Phi>0$ and $q'<0$, then the similar integration yields
\begin{equation*}
  q'=-\Phi(q),
\end{equation*}
where the same conformal factor (\ref{ucohjy}) stands in the right hand side.
This case can be united with the previous one by re-writing equation for $q$ in
the form
\begin{equation}                                                  \label{ueqsra}
  |q'|=\Phi(q),\qquad q=q(\sigma),\quad \Phi>0.
\end{equation}
The modulus sign in the left hand side means that if $q(\sigma)$ is a solution, then
the function $q(-\sigma)$ is also the solution.

The static case for $\Phi<0$ is integrated in the same way:
\begin{equation}                                                  \label{uftrye}
  |q'|=-\Phi(q),\qquad q=q(\sigma),\quad \Phi<0.
\end{equation}

If solution is homogeneous, $q=q(\tau)$ and $\Phi>0$, $q'>0$, then integration
of Eq.~(\ref{usedrf}) yields the equality
\begin{equation*}
  q'=-\left(K^h-\frac{2M}q+\frac{Q^2}{q^2}-\frac{\Lambda q^2}3\right).
\end{equation*}
That is the conformal factor must be identified with the right hand side
\begin{equation}                                                  \label{unewsd}
  \hat\Phi=-\left(K^h-\frac{2M}q+\frac{Q^2}{q^2}-\frac{\Lambda q^2}3\right).
\end{equation}
We denote the expression for the conformal factor through $q$ by hat because in
homogeneous case it differs by the sign. Thus, homogeneous solutions of
Einstein's equations can be written in the form
\begin{alignat}{3}                                                \label{ufredl}
  |q'|&=~~\hat\Phi(q),\qquad & q&=q(\tau),\quad && \hat\Phi>0.
\\                                                                \label{usdeax}
  |q'|&=-\hat\Phi(q),\qquad & q&=q(\tau),\quad && \hat\Phi<0.
\end{alignat}

If the conformal factor is negative, then the signature of the metric is
$(-+--)$. In this case, we return to the previous signature $(+---)$ after
substitution $\tau\leftrightarrow\sigma$. This transformation allows us to unite
static and homogeneous solutions by taking the modulus of the conformal factor
in the expression for metric (\ref{unjhsy}). Then a general solution of vacuum
Einstein's equations with electromagnetic field (\ref{ubsvgr}) in the
corresponding coordinate system takes the form
\begin{equation}                                                  \label{qnhtsd}
 ds^2=|\Phi|(d\tau^2-d\sigma^2)-q^2d\Omega,
\end{equation}
where the conformal factor $\Phi$ is given by Eq.~(\ref{ucohjy}). Here the
variable $q$ depends on $\sigma$ (static local solution) or $\tau$
(homogeneous local solution) through the differential equation
\begin{equation}                                                  \label{qdtres}
  \left|\frac{dq}{d\zeta}\right|=\pm\Phi(q),
\end{equation}
where the sign rule holds:
\begin{equation}                                                  \label{esignk}
\begin{array}{ccl}
\Phi>0: & \quad \zeta=\sigma, &\quad \text{the sign $+$ (static local solution)},\\
\Phi<0: & \quad \zeta=\tau,&\quad \text{the sign $-$ (homogeneous local solution)}.
\end{array}
\end{equation}
Thus the four-dimensional Einstein's equations imply that there is the metric
with one Killing vector field on surface ${\mathbb U}$ which was considered in full
detail in \cite{Katana00A}. Now we can construct global solutions (maximally
extended along geodesics) of vacuum Einstein's equations using the conformal
block method. The number of singularities and zeroes of conformal factor
(\ref{ucohjy}) depends on relations between constants $K$, $M$, $Q$, and $\Lambda$.
Therefor there are many qualitatively different global solutions, which are
considered in next sections.

Conformal factor (\ref{ucohjy}) has one singularity: the second order pole at
$q=0$. Therefor according to the rules formulated in \cite{Katana00A,Katana13B}
every global solution correspond to one of the intervals $(-\infty,0)$ or
$(0,\infty)$. The form of conformal factor (\ref{ucohjy}) implies that
these global solutions are obtained one from the other by the transformation
$M\mapsto-M$. Hence, without loss of generality, we describe global solutions
corresponding to both intervals but positive values of $M$.

Because conformal factor $\Phi(q)$ is a smooth function for $q\ne0$, all
arising Lorentzian surfaces ${\mathbb U}$ and metrics on them are smooth.

To conclude the section we compute geometrical invariants which show that
obtained solution of Einstein's equations are nontrivial. First, we compute
the scalar curvature $R^g$ of the surface ${\mathbb U}$. Equations (\ref{undmwi}) and
(\ref{usneoo}) imply
\begin{equation*}
  R^g=-\frac{2K^h}m+\frac{(\nabla m)^2}{2m^2}-\frac{4Q^2}{m^2}
  =\frac{2K^h}{q^2}+\frac{2(\nabla q)^2}{q^2}-\frac{4Q^2}{q^4}.
\end{equation*}
Since
\begin{equation*}
  (\nabla q)^2=\frac1\Phi\eta^{\alpha\beta}\partial_\alpha q\partial_\beta q
  =-\frac{q'{}^2}\Phi
\end{equation*}
both for static and homogeneous solutions, the final expression is
\begin{equation}                                                  \label{ubjdgh}
  R^g=\frac{2\Lambda}3+\frac{4M}{q^3}-\frac{6Q^2}{q^4}.
\end{equation}
It does not depend on Gaussian curvature $K^h$ of Riemannian surface ${\mathbb V}$ and
is singular for $q=0$ if $M\ne0$ and/or $Q\ne0$.
\subsubsection{Metric signature $(-+++)$}
If $m>0$, then the signature of the metric is opposite $(-+++)$, and we
introduce parameterization
\begin{equation*}
  m:=q^2,\qquad q>0,
\end{equation*}
instead of Eq.~(\ref{ubndoi}). Performing the same calculation as in the previous
section, we obtain the first order equation for $q$:
\begin{equation}                                                  \label{uncmxy}
  \left|\frac{dq}{d\zeta}\right|=\pm\Phi(q),
\end{equation}
where $M$ is an integration constant and
\begin{equation}                                                  \label{ubbchd}
  \Phi(q):=\left(K^h-\frac{2M}q-\frac{Q^2}{q^2}+\frac{\Lambda q^2}3\right).
\end{equation}
Here we must take into account that for getting the signature $(-+++)$ we have
to make interchanging $\tau\leftrightarrow\sigma$. We see that for drawing the
Carter--Penrose diagram one has to simply make replacement $Q^2\mapsto-Q^2$ and
$\Lambda\mapsto-\Lambda$ as compared to signature $(+---)$.

Now we describe all spatially symmetric global solution of Einstein's equations
with electromagnetic field which are defined by zeroes and their types of the
conformal factor $\Phi(q)$.
\subsection{Spherically symmetric solutions $K^h=1$              \label{sphers}}
In the considered case, global spherically symmetric solutions, that is pairs
$({\mathbb M},\widehat g)$, have the form ${\mathbb M}={\mathbb U}\times{\mathbb S}^2$, where ${\mathbb U}$ is the
maximally extended Lorentzian surface which is depicted by the Carter--Penrose
diagram. Four-dimensional metric on ${\mathbb M}$ has the form (\ref{qnhtsd}), where
$d\Omega$ is the metric on sphere ${\mathbb S}^2$ for signature $(+---)$. If the signature
is opposite $(-+++)$, then we have to replace $Q^2\mapsto-Q^2$ and $\Lambda\to-\Lambda$
in the conformal
factor and change the sign of $d\Omega$ in metric (\ref{qnhtsd}). Due to the
existence of one Killing vector on Lorentzian surface ${\mathbb U}$, we are able to
classify all global solutions. To construct Carter--Penrose diagrams, we use the
conformal block method described in \cite{Katana00A} (see also,
\cite{Katana13B}). First, we consider
solutions of signature $(+---)$, and then with signature $(-+++)$.
\subsubsection{Metric signature $(+---)$}
If the metric signature is $(+---)$, then the conformal factor is
\begin{equation}                                                  \label{uvcfre}
  \Phi(q)=1-\frac{2M}q+\frac{Q^2}{q^2}-\frac{\Lambda q^2}3=:
  \frac{\varphi(q)+3Q^2}{3q^2},
\end{equation}
where we introduced the auxiliary function
\begin{equation}                                                  \label{uvxgij}
  \varphi(q):=-\Lambda q^4+3q^2-6Mq
\end{equation}
which is needed for further analysis. The case $Q=0$ was analyzed in
\cite{KaKlKu99}. Therefor, we classify solutions for $Q\ne0$. Without loss of
generality, we consider the case $Q>0$, because only $Q^2$ enters the conformal
factor.

Conformal factor (\ref{uvcfre}) has the second order pole $Q^2/q^2$ at zero and
the following asymptotic at infinity
\begin{equation*}
  \Phi\approx1-\frac{\Lambda q^2}3,\qquad q\to\infty.
\end{equation*}
If cosmological constant is equal to zero, then metric is asymptotically flat.
For $\Lambda>0$ and $\Lambda<0$, we have asymptotically de Sitter and anti-de Sitter
spacetime, respectively.

A global solution corresponds to one of the intervals $q\in(0,\infty)$ or
$q\in(-\infty,0)$ and $M>0$, because the curvature has singularity
(\ref{ubjdgh}) at zero, and space-time is not extendable through this point.
Roots of conformal factor (\ref{uvcfre}) correspond to horizons of space-time,
and Carter--Penrose diagrams are defined by the number and type of zeroes of the
conformal factor \cite{Katana00A}. Thus we have to analyse the number and type
of zeroes of conformal factor (\ref{uvcfre}) for all possible values of
constants $\Lambda$, $M\ge0$, and $Q>0$.

Note that conformal factor (\ref{uvcfre}) is invariant with respect to
transformation
\begin{equation*}
  M\to-M,\qquad q\to-q.
\end{equation*}
Therefor, instead of constructing global solutions on the interval
$q\in(0,\infty)$ for all values of $M$, we restrict ourselves only for
nonnegative $M\ge0$, but on two intervals $q\in(-\infty,0)$ and $(0,\infty)$.
This simplifies the analysis of the conformal factor.

We start with the simplest and well known case $\Lambda=0$.
\subsubsection{Metric signature $(+---)$. The case $\Lambda=0$.}
If cosmological constant vanishes, then zeroes of conformal factor
(\ref{uvcfre}) are defined by the quadratic equation
\begin{equation}                                                  \label{umdopi}
  q^2-2Mq+Q^2=0,
\end{equation}
which has two roots:
\begin{equation}                                                  \label{unbvgt}
  q_\pm=M\pm\sqrt{M^2-Q^2}.
\end{equation}

{\bf The Reissner--Nordstr\"om solution.}
For $Q<M$, there are two positive simple roots. This solution is called the
Reissner--Nordstr\"om solution \cite{Reissn16,Nordst18} and depicted by the
Carter--Penrose diagram S1 shown in Fig.\ref{fcarpenspherEM}. It was also found
by H.~Weyl \cite{Weyl17}. The solution has two horizons at $q_-$ and $q_+$ and
naked timelike singularity at $q=0$. The conformal factor tends to unity at
infinity, and, consequently, the Reissner--Nordstr\"om solution is
asymptotically flat. Arrows on the diagram show directions in which the solution
can be periodically extended in time. Instead of periodic extension, there is
the possibility to identify the opposite horizons. The singularity at $q=0$ is
timelike, and an observer can approach it as close as he likes in conformal
blocks I or III, and then enter universe III or I by going through conformal
block IV. Therefor, the Reissner--Nordstr\"om solution does not describe a
black hole.
\begin{figure}[p]
\hfill\includegraphics[width=0.9\textwidth]{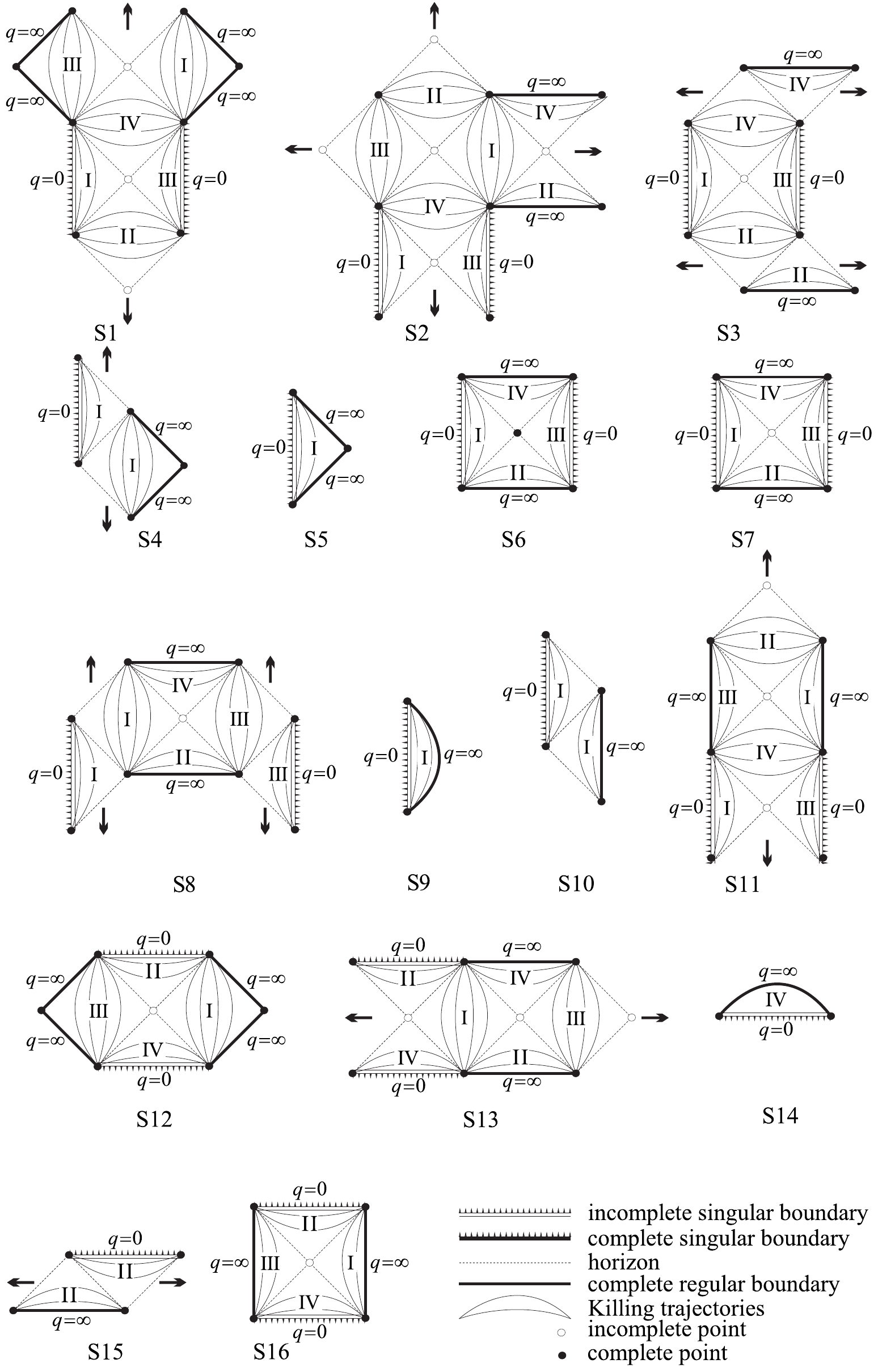}
\hfill {}
\centering\caption{The Carter--Penrose diagrams for spherically symmetric
solutions of Einstein's equations with electromagnetic field. Diagrams S1--S11
and S12--S16 correspond to metrics of signature $(+---)$ and $(-+++)$,
respectively.}
\label{fcarpenspherEM}
\end{figure}

{\bf Extremal black hole.}
For $Q=M$, the conformal factor is
\begin{equation*}
  \Phi=\frac{(q-M)^2}{q^2}.
\end{equation*}
It has one positive root of second order at $q=M$. The corresponding
Carter--Penrose diagram is shown in Fig.\ref{fcarpenspherEM}, S4. It is called
extremal black hole, though there is no any black hole since the singularity is
timelike and horizon surrounding the singularity is absent. There is also
space-reflected diagram.

{\bf Naked singularity.}
For $Q>M$, horizons are absent, and we have naked singularity shown in
Fig.\ref{fcarpenspherEM}, S5.  There is also space-reflected diagram.
\subsubsection{Metric signature $(+---)$. The case $\Lambda>0$.}
For positive cosmological constant, zeroes of the conformal factor are defined
by the fourth order equation
\begin{equation}                                                  \label{ubbcgt}
  \varphi(q)+3Q^2=0,
\end{equation}
where function $\varphi(q)$ is given by the fourth order polynomial (\ref{uvxgij}).
To draw Carter--Penrose diagrams, we do not need to know exact position of
zeroes. We have to know only their existence and type. Therefor, we analyze
function $\varphi(q)$ qualitatively and then move its graphic up, which corresponds
to increasing value of $Q^2$.

First, we differentiate function (\ref{ubbcgt}):
\begin{equation}                                                  \label{unmjqj}
\begin{split}
  \varphi'(q)=&-4\Lambda q^3+6q-6M,
\\
  \varphi''(q)=&-12\Lambda q^2+6=-6(2\Lambda q^2-1).
\end{split}
\end{equation}

The asymptotics of function $\varphi(q)$ ($\Lambda>0$) and its derivatives for $q=0$
and $q\to\infty$ are easily found:
\begin{align}                                                          \nonumber
  \varphi(0)=&~~0, & \varphi(q\to\infty)\approx&-\Lambda q^4,
\\                                                                \label{edswhy}
  \varphi'(0)=&-6M, & \varphi'(q\to\infty)\approx&-4\Lambda q^3,
\\                                                                     \nonumber
  \varphi''(0)=&~~6, & \varphi''(q\to\infty)\approx&-12\Lambda q^2.
\end{align}

Zeroes of function $\varphi(q)+3Q^2$ require more work. As we see later, their
number does not exceed three. To find the types of zeroes, we have to know local
extrema of function $\varphi(q)$, which become zeroes of order two or three after
shifting on $3Q^2$.

Local extrema of function $\varphi$ are defined by cubic equation (the solution is
given, i.e.\ in \cite{KorKor68})
\begin{equation}                                                  \label{uvskju}
  q^3-\frac3{2\Lambda} q+\frac{3M}{2\Lambda}=0.
\end{equation}
There are three qualitatively distinct cases depending on the value of constant
\begin{equation}                                                  \label{ebgtdj}
  \Upsilon:=-\frac1{8\Lambda^3}+\frac{9M^2}{16\Lambda^2}.
\end{equation}
Namely,
\begin{align*}
  \Upsilon>0\quad\Leftrightarrow\quad |M|>&\frac13\sqrt{\frac2\Lambda} \qquad
  \text
  {--\quad \parbox[c]{0.5\textwidth}{one real and two complex conjugate roots,}}
\\[6pt]
  \Upsilon=0\quad\Leftrightarrow\quad |M|=&\frac13\sqrt{\frac2\Lambda} \qquad
  \text{--\quad \parbox[c]{0.5\textwidth}{three real roots \\
  (at least two roots coincide),}}
\\[6pt]
  \Upsilon<0\quad\Leftrightarrow\quad |M|<&\frac13\sqrt{\frac2\Lambda} \qquad
  \text{--\quad three different real roots.}
\end{align*}

We start with the simplest case $\Upsilon=0$. This equality implies restriction
on ``mass'':
\begin{equation}                                                  \label{edbgdt}
  \Upsilon=0\qquad\Leftrightarrow\qquad M=\frac13\sqrt{\frac2\Lambda}.
\end{equation}
Moreover, roots of Eq.~(\ref{uvskju}) take the simple form:
\begin{equation}                                                  \label{unbhsk}
  M=\frac13\sqrt{\frac2\Lambda}:\qquad q_1=-\sqrt{\frac2\Lambda}, \qquad
  q_{2,3}=\frac12\sqrt{\frac2\Lambda},
\end{equation}
As we see, there are one simple negative root and one positive root of second
order for positive ``mass'' (\ref{edbgdt}).

If inequality $\Upsilon<0$ holds, then real roots of cubic equation
(\ref{uvskju}) are (see, i.e., \cite{KorKor68})
\begin{equation}                                                  \label{ubcndt}
  q_3=\sqrt{\frac2\Lambda}\cos\frac\alpha3,\qquad
  q_{2,1}=-\sqrt{\frac2\Lambda}\cos\left(\frac\alpha3\pm\frac\pi3\right),
\end{equation}
where
\begin{equation*}
  \cos\alpha:=-3M\sqrt{\frac\Lambda2}.
\end{equation*}
Since we consider only nonnegative $M$, then
$\alpha\in\big[\frac\pi2,\frac{3\pi}2\big]$. It implies existence of one negative
root $q_1$ and two positive: $q_2$ and $q_3$. We enumerate the zeroes in
Eq.~(\ref{ubcndt}) in such a way, that, in the limit
\begin{equation*}
  M\to\frac13\sqrt{\frac2\Lambda},
\end{equation*}
they take values (\ref{unbhsk}).

If $\Upsilon>0$, then we have only one negative root $q_1$. Its exact
position can be written but it is not needed.

Figure \ref{fgremone}, {\em a}, shows qualitative behavior of function $\varphi(q)$
for $\Lambda>0$ and different values of $M\ge0$.
\begin{figure}[hbt]
\hfill\includegraphics[width=.9\textwidth]{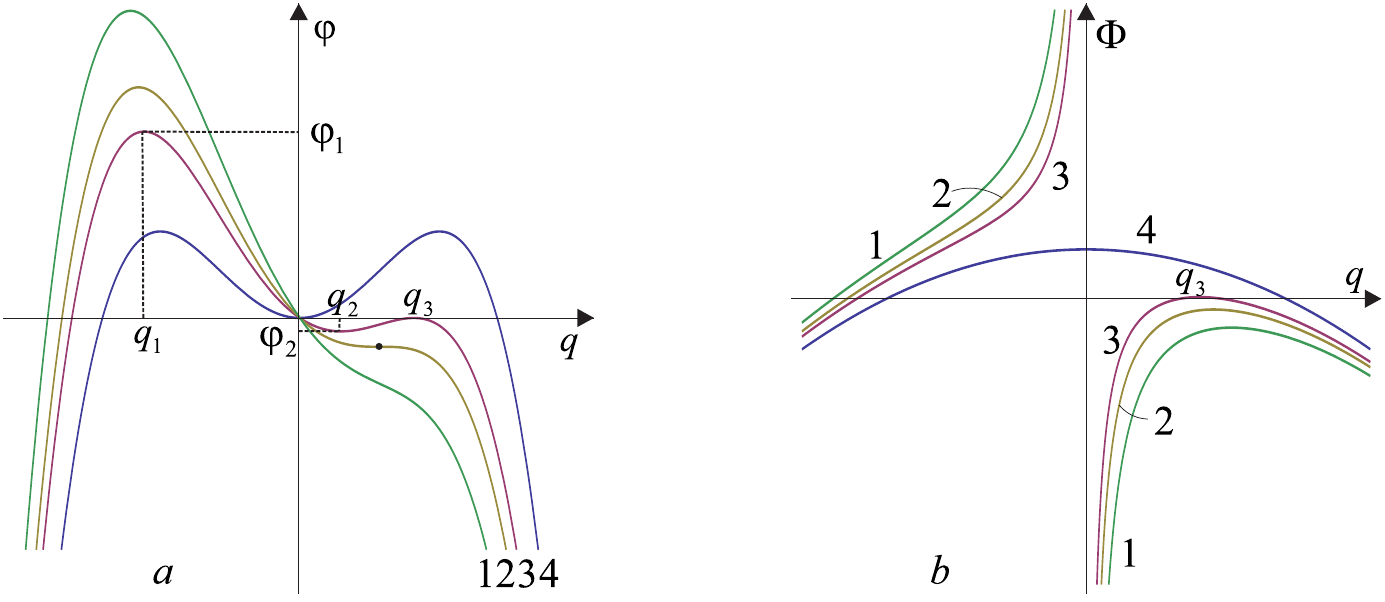}
\hfill {}
\centering\caption{Auxiliary function $\varphi(q)$ for $\Lambda>0$ {\em (a)} and
conformal factor $\Phi(q)$ for $Q=0$ {\em (b)}. The curves correspond to the
following values of the constant: (1) $\Upsilon>0$, (2) $\Upsilon=0$, (3)
$\Upsilon<0$, and (4) $\Upsilon=-\frac1{8\Lambda^3}$. Local extrema for curve 3
on the left picture are located at points $q_1$, $q_2$, and $q_3$. For curve
2, local maximum and minimum coincide, that is $q_2=q_3$, and are denoted by the
fat point. For curve 1, there is only one maximum for negative $q$. Curve 4 on
the left is symmetric with respect to substitution $q\mapsto-q$, has local
minimum at $q=0$, and two maxima at points $q=\pm\sqrt{\frac3{2\Lambda}}$.}
\label{fgremone}
\end{figure}
Now, to construct all global solutions which exist in the theory for signature
$(+---)$, we have to analyse zeroes of conformal factor $\Phi(q)$, qualitative
behavior of which for $Q=0$ is shown in Fig.\ref{fgremone}, {\it b}. Zeroes of
the conformal factor and their type coincide with that of function
$\varphi(q)+3Q^2$. Therefor, we have to shift up curves 1--4 in Fig.\ref{fgremone},
{\it a}, on $3Q^2$ to analyse its qualitative behavior. The number and type of
zeroes depend on curves 1--4 and on the value of the shift $3Q^2$. All possible
Carter--Penrose diagrams are drawn in Fig.\ref{fcarpenspherEM}.

The conformal factor depicted by curve 4 in Fig.\ref{fgremone}, {\it b}, does
not have zero at $q=0$. It corresponds to de Sitter space, and is degenerate at
this presentation of the problem ($M=0$, $Q=0$), which is not considered here
because of the assumption $Q>0$.

For qualitative description of behavior of the conformal factor, we introduce
notation:
\begin{equation}                                                  \label{ubvsik}
  \varphi_1:=\varphi(q_1),\qquad\varphi_2:=\varphi(q_2),\qquad\varphi_3:=\varphi(q_3),
\end{equation}
where $\varphi_1$ is the maximum, $\varphi_2$ is local minimum, and $\varphi_3$ is local
maximum of the auxiliary function $\varphi(q)$. One can easily verify, that, for
$\Lambda>0$ and $q<0$, the maximum is positive: $\varphi_1>0$. On positive half line
$q>0$, the local minimum is always negative: $\varphi_2<0$, and local maximum
$\varphi_3$ can take negative as well as positive values:
\begin{align*}
  0<M<\frac1{\sqrt\Lambda},& &&\varphi_3>0,
\\
  M=\frac1{\sqrt\Lambda}, & &&\varphi_3=0,
\\
  M>\frac1{\sqrt\Lambda}, & &&\varphi_3<0.
\end{align*}
When Eq.~(\ref{edbgdt}) holds, local minimum and maximum coincide: $q_2=q_3$.
Now we list all possibilities in the considered case.

{\bf Three horizons.} Under condition
\begin{equation}                                                  \label{ubcvdr}
  -\varphi_3<3Q^2<-\varphi_2,
\end{equation}
the conformal factor has three simple zeroes on positive half line. The
corresponding Carter--Penrose diagram of surface ${\mathbb U}$ is given by
S2 in Fig.\ref{fcarpenspherEM}. Here we have two timelike naked singularities.
Arrows show that this diagram can be either periodically continued in space- and
timelike directions, or opposite horizons can be identified. If we identify
horizons in one direction, them topologically the surface ${\mathbb U}$ is a cylinder.
If identification is performed in both directions, then it is a torus.

{\bf One simple horizon and timelike singularity.}
The conformal factor has one simple zero on positive half line under the
following conditions:
\begin{equation}                                                  \label{ubvgfg}
\begin{aligned}
  &\Lambda>0,\qquad& \Upsilon<0,& & {}\qquad&M=0,  & {}\qquad&{} & &Q\ne0,
\\
  &\Lambda>0,& \Upsilon<0,& & &M>0,  & &\varphi_3<0, & {}\qquad 3&Q^2<-\varphi_3,
\\
  &\Lambda>0,& \Upsilon<0,& & &M>0,  & &\forall\,\varphi_3, & 3&Q^2>-\varphi_2,
\\
  &\Lambda>0,& \Upsilon=0,& & &M>0,  & & & 3&Q^2\ne-\varphi_2,
\\
  &\Lambda>0,& \Upsilon>0,& & &M>0,  & & & &Q\ne0,
\\
  &\Lambda>0,& \forall\,\Upsilon,& & &M<0,  & &\forall\,\varphi_3, & &Q\ne0.
\end{aligned}
\end{equation}
This global solution is depicted by the Carter--Penrose diagram S7. It has
timelike singularity.

{\bf Triple horizon.} Under conditions:
\begin{equation}                                                  \label{uvbxfr}
  \Lambda>0,\qquad\Upsilon=0,\qquad M>0,\qquad 3Q^2=-\varphi_2.
\end{equation}
local maximum and minimum of auxiliary function $\varphi(q)$ coincide:
$q_2=q_3$, and the conformal factor has zero of third order at point $q_2$
(triple horizon). This case is depicted by diagram S6. It coincides with
diagram S7, but there is one important difference: the saddle point $q_2$ in the
center of the diagram is geodesically complete.

This diagram is interesting from physical standpoint. Consider a spacelike
section of this diagram. If the section does not go through the saddle point,
which is located in the center of the diagram, then it is an interval of finite
length with singular ends where two-dimensional curvature becomes infinite. If
the space section goes through the saddle point then it is the union of two
half-infinite intervals, because the central point in the center of the diagram
is the space infinity. If we introduce now global evolution parameter $T$, for
instance, vertical line on the diagram, then topology of space sections change
during evolution: for some value of $T$, there are two half-infinite intervals
instead of one finite interval. This example shows that changing topology of
space in time can occur already at the classical level. This type of diagram
appeared first in two-dimensional gravity with torsion \cite{Katana93A}.

{\bf Two horizons with double local minima.} Under conditions:
\begin{equation}                                                  \label{ubvcgt}
  \Lambda>0,\qquad\Upsilon<0,\qquad M>0,\qquad 3Q^2=-\varphi_2,
\end{equation}
the conformal factor has one zero of second order at point $q_2$ and one simple
simple zero at some point lying to the right from $q_2$. This solution is
depicted by Carter--Penrose diagram S8 with two timelike singularities, which
can be periodically extended in timelike direction.

{\bf Two horizons with double local maximum.} Under conditions:
\begin{equation}                                                  \label{ubvcgk}
  \Lambda>0,\qquad\Upsilon<0,\qquad M>0,\qquad \varphi_3<0,\qquad 3Q^2=-\varphi_3,
\end{equation}
the conformal factor has one double zero at $q_3$ and one simple zero at some
point lying to the left from $q_2$. This solution corresponds to Carter--Penrose
diagram S3 with two timelike singularities, which can be periodically extended
in spacelike direction.
\subsubsection{Metric signature $(+---)$. The case $\Lambda<0$.}
For negative cosmological constant, the conformal factor have the same form and
asymptotics remain the same (\ref{edswhy}). Equation (\ref{uvskju}) and
constant (\ref{ebgtdj}), defining the roots, do not change. We see that values
of constant $\Upsilon$ are positive for all $\Lambda$ and $M$. Consequently,
Eq.~(\ref{uvskju}) has only one nonnegative real root. Moreover, now branches of
auxiliary function $\varphi(q)$ are directed upwards as shown in Fig.\ref{fvfnelm},
and three new Carter--Penrose diagrams appear in the spherically symmetric case.
\begin{figure}[hbt]
\hfill\includegraphics[width=.9\textwidth]{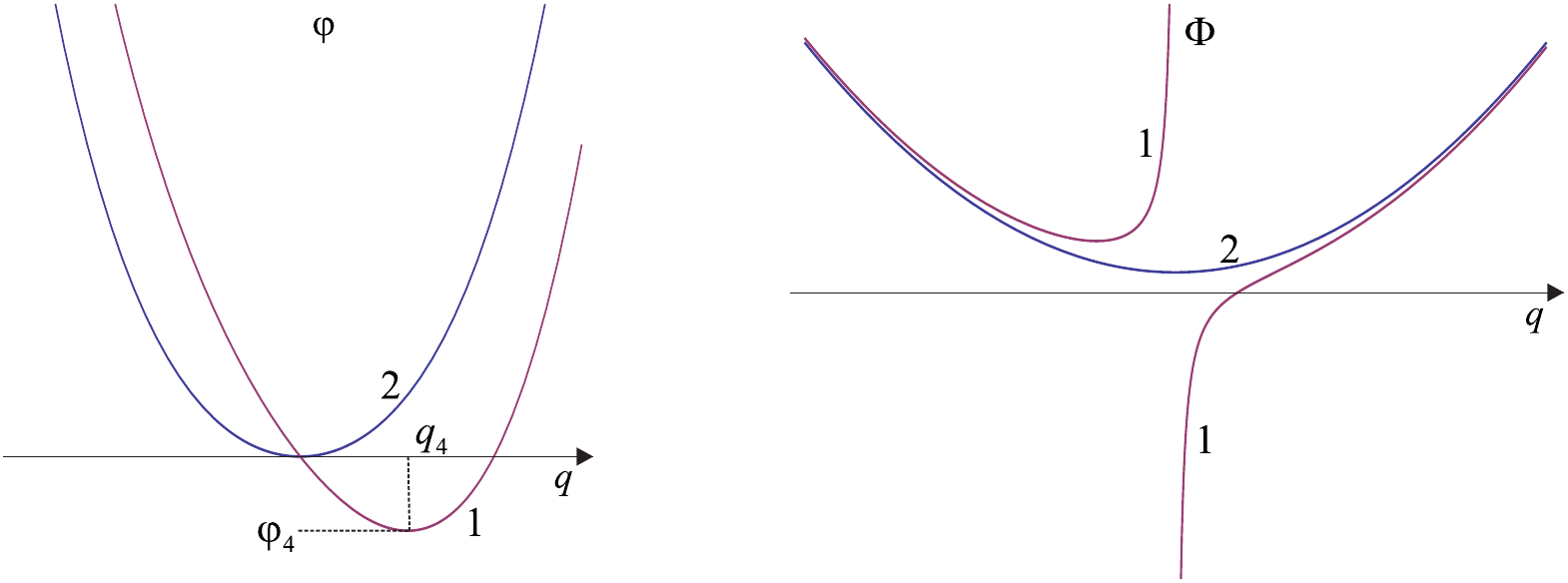}
\hfill {}
\centering\caption{Auxiliary function $\varphi(q)$ {\em (a)} and conformal factor
$\Phi(q)$ for $\Lambda<0$ and $Q=0$ {\em (b)}. The curves correspond to the
following values of the constant: (1) $\Upsilon>-\frac1{8\Lambda^3}$ and (2)
$\Upsilon=-\frac1{8\Lambda^3}$ ($M=0$). On the left picture, the only minimum of
curve 1 is located at point $q_4$. Curve 2 on the left is invariant with respect
to the map $q\mapsto-q$ and has minimum at $q=0$.}
\label{fvfnelm}
\end{figure}

The conformal factor depicted by curve 2 in Fig.\ref{fvfnelm}, {\it b}, has zero
at point $q=0$. It corresponds to anti-de Sitter space and is the degenerate
case in the problem under consideration ($M=0$, $Q=0$).

Now we list all possibilities for negative cosmological constant.

{\bf Timelike singularity.} Under conditions:
\begin{equation}                                                  \label{ubvgfl}
\begin{aligned}
  &\Lambda<0,\qquad& {}\qquad&M>0,  & {}\qquad&{} & &3Q^2>-\varphi_4,
\\
  &\Lambda<0,&  &M\le0,  & &\varphi_3<0, & {}\qquad 3&Q\ne0,
\end{aligned}
\end{equation}
the conformal factor does not have zeroes, and, consequently, horizons are
absent. In this case, the Carter--Penrose diagram has the lens form S9 in
Fig.\ref{fcarpenspherEM}. There is also space-reflected diagram.

{\bf Naked singularity.} Under conditions:
\begin{equation}                                                  \label{ubndht}
  \Lambda<0,\qquad M>0,\qquad 3Q^2=-\varphi_4,
\end{equation}
the conformal factor has one positive root of second order at the minimum of
the auxiliary function at $q_4$. In this case, the Carter--Penrose diagram is
S10 in Fig.\ref{fcarpenspherEM}. In contrast to the naked singularity S4, the
right complete infinity $q=\infty$ is timelike. It is due to asymptotic of the
conformal factor at infinity, because space-time is asymptotically
anti-de Sitter for $\Lambda<0$. There is also space-reflected diagram.

{\bf Timelike singularity and two horizons.} Under conditions:
\begin{equation}                                                  \label{ubvcfj}
  \Lambda<0,\qquad M>0,\qquad 3Q^2<-\varphi_4,
\end{equation}
the conformal factor has two zeroes. In this case, the Carter--Penrose diagram
is given by S11 in Fig.\ref{fcarpenspherEM}. This solution can either be
periodically extended in timelike direction or opposite horizons can be
identified. In contrast to diagram S1, space infinities are timelike, which is
due to asymptotic at infinity.

Thus we classified all spherically symmetric global solutions of Einstein's
equations with electromagnetic field for metric signature $(+---)$. We see, that
all solutions of signature $(+---)$ contain timelike singularity.
Totally, we get 11 topologically inequivalent solutions S1--S11. It is possible
to give more subtle classification taking into account existence of
degenerate and oscillating geodesics. The latter appears, if the conformal
factor has local extremum inside one of the conformal blocks. This
classification was given for global solutions of two-dimensional gravity with
torsion \cite{Katana93A}.
\subsubsection{Metric signature $(-+++)$}
If the signature is opposite, the conformal factor has the form (\ref{uvcfre})
but with the replacement $Q^2\mapsto-Q^2$. It means that auxiliary function
$\varphi(q)$ in Figs.~\ref{fgremone} and \ref{fvfnelm}, {\it a}, remains the same,
but we have to move it downwards instead of upwards. There are 5 new
Carter--Penrose diagrams.

We start with the simplest case.
\subsubsection{Metric signature $(-+++)$. The case $\Lambda=0$.}
In the considered case, zeroes of the conformal factor are defined by quadratic
equation
\begin{equation*}
  q^2-2Mq-Q^2=0,
\end{equation*}
which has two roots:
\begin{equation*}
  q_\pm=M\pm\sqrt{M^2+Q^2}.
\end{equation*}
It implies inequalities $q_+>0$ and $q_-<0$ for $Q>0$. Therefor, there is one
simple horizon for any $M$. Consequently, the Carter--Penrose diagram has
exactly the same form as for Schwarzschild black hole S12 in
Fig.\ref{fcarpenspherEM}.
\subsubsection{Metric signature $(-+++)$. The case $\Lambda>0$.}
Auxiliary function $\varphi(q)$ is the same (\ref{uvxgij}), but it has to be moved
on $3Q^2$ downwards. For positive cosmological constant, the qualitative
behavior of the auxiliary function is shown in Fig.\ref{fgremone}, {\em a}.

{\bf Spacelike singularity.} Under conditions:
\begin{equation}                                                  \label{ubvgfj}
\begin{aligned}
&\Lambda>0,\qquad& &\Upsilon<0,& {}\qquad&M>0,  & {}\qquad\varphi_3\le0,&{} &&Q\ne0,
\\
  &\Lambda>0,& &\Upsilon<0,& &M>0,  & \varphi_3>0,& & {}\qquad 3&Q^2>\varphi_3,
\\
  &\Lambda>0,& &\forall\Upsilon,& &M<0,  & \varphi_3>0,& & {}\qquad 3&Q^2>\varphi_1,
\end{aligned}
\end{equation}
the conformal factor does not have roots. In this case, there is spacelike
singularity without horizons. Its Carter--Penrose diagram is S14 in
Fig.~\ref{fcarpenspherEM}. There is also time-reflected diagram.

{\bf Spacelike singularity with two horizons.} Under conditions:
\begin{equation}                                                  \label{ubvgff}
\begin{aligned}
  &\Lambda>0,\qquad& \Upsilon<0,&{}\qquad&M>0,&\qquad\varphi_3>0,& &  0<3Q^2<\varphi_3,
\\
  &\Lambda>0,& \forall\,\Upsilon,& &M<0,  & &  {}\qquad &0<3Q^2<\varphi_1,
\end{aligned}
\end{equation}
the conformal factor has two simple zeroes, and, consequently, two horizons.
Moreover the singularity at $q=0$ is spacelike. This solution is depicted by
diagram S13 in Fig.~\ref{fcarpenspherEM}. It can be either periodically extended
in spacelike direction, or we can identify the opposite horizons. This solution
describes white and black holes, which are periodically located in spacelike
directions. Moreover, if an observer is located in the domain IV, he has the
opportunity either to live forever, or to fall on one of two black holes.

{\bf Spacelike singularity with one double horizon.} Under conditions:
\begin{equation}                                                  \label{ubvgfk}
\begin{aligned}
  &\Lambda>0,\qquad& \Upsilon<0,&{}\qquad&M>0,&\qquad\varphi_3>0,& &  3Q^2=\varphi_3,
\\
  &\Lambda>0,& \forall\,\Upsilon,& &M<0,  & &  {}\qquad &3Q^2=\varphi_1,
\end{aligned}
\end{equation}
the conformal factor has one double zero, and the singularity is spacelike. This
global solution is given by the Carter--Penrose diagram S15 in
Fig.~\ref{fcarpenspherEM}, which can be periodically extended in spacelike
direction. This solution describes the collection of black and white (after
time reflection) holes. As in the previous case, an observer in domain II has
the choice either to live forever or to fall on one of two black holes. There is
also time-reflected diagram.
\subsubsection{Metric signature $(-+++)$. The case $\Lambda<0$.}
For negative cosmological constant and signature $(-+++)$, the auxiliary
function has previous form and is shown in Fig.\ref{fvfnelm}. To find zeroes,
its graphic must be moved downwards. Thus for all values of parameters:
\begin{equation}                                                  \label{unbhju}
  \Lambda<0,\qquad\forall\,\Upsilon,\qquad\forall M,\qquad Q\ne0,
\end{equation}
it has one simple zero. This global solution is given by diagram S16 in
Fig.~\ref{fcarpenspherEM}. In this case we have asymptotically anti-de Sitter
black hole. Note, that, for positive mass, the space-time has degenerate and
oscillating geodesics, because local minimum exists for $q>0$. For $M<0$ these
geodesics are absent.

Thus, for metric signature $(-+++)$, there are only 5 topologically different
global solutions S12--S16. All singularities in this case are spacelike and
correspond either to black or white holes.
\subsection{Planar solutions $K^h=0$                             \label{sdbhft}}
If Gaussian curvature of surface ${\mathbb V}$ equals to zero, then it is either the
Euclidean plane ${\mathbb R}^2$, or a cylinder, or a torus (after factorization).
Thus, there is spontaneous ${\mathbb I}{\mathbb S}{\mathbb O}(2)$ symmetry arising if the surface
${\mathbb V}$ is Euclidean plane ${\mathbb R}^2$. That is, the space-time metric becomes
invariant with respect to ${\mathbb I}{\mathbb S}{\mathbb O}(2)$ transformation group on the equations
of motion. In Schwarzschild coordinates $(\zeta,q,y,z)$, it is written in the form
(for $m=-q^2<0$, corresponding to signature $(+---)$):
\begin{equation}                                                  \label{uvxcfd}
  ds^2=\Phi(q)d\zeta^2-\frac{dq^2}{\Phi(q)}-q^2d\Omega_{\textsc{p}},
\end{equation}
where
\begin{equation}                                                  \label{ubcvdf}
  \Phi(q)=-\frac{2M}q+\frac{Q^2}{q^2}-\frac{\Lambda q^2}3,\qquad
  d\Omega_{\textsc{p}}:=dy^2+dz^2.
\end{equation}

To draw Carter--Penrose diagrams for Lorentzian surface ${\mathbb U}$, we have to
analyse zeroes and asymptotics of conformal factor $\Phi(q)$. For $Q\ne0$,
we have the second order pole $Q^2/q^2$ at zero and asymptotic at infinity
\begin{equation*}
  \Phi\approx-\frac{\Lambda q^2}3,\qquad q\to\infty
\end{equation*}
On intervals $(0,\infty)$ and $(-\infty,0)$, the conformal factor is smooth,
and, consequently, every global solution corresponding to one of these
intervals is smooth.
As for spherically symmetric solutions, we consider positive $M$ on both
intervals due to the symmetry transformation $(M,q)\mapsto(-M,-q)$.

We start with the simplest case.
\subsubsection{Metric signature $(+---)$. The case $\Lambda=0$.}
The conformal factor is
\begin{equation}                                                  \label{ubjsuu}
  \Phi(q)=\frac{Q^2-2Mq}{q^2}.
\end{equation}
It has obviously one simple zero
\begin{equation*}
  q=\frac{Q^2}{2M}.
\end{equation*}
Moreover, there are only two cases.

{\bf Timelike singularity and one horizon.}
Under conditions:
\begin{equation}                                                  \label{ubcvdi}
  \Lambda=0,\qquad M>0,
\end{equation}
the conformal factor has one simple positive zero. The corresponding
Carter--Penrose diagram is P1 in Fig.~\ref{fcarpenpl}. This diagram has the
same form as the Schwarzschild black hole S12 but turned over on $90^\circ$.
\begin{figure}[hbt]
\hfill\includegraphics[width=.3\textwidth]{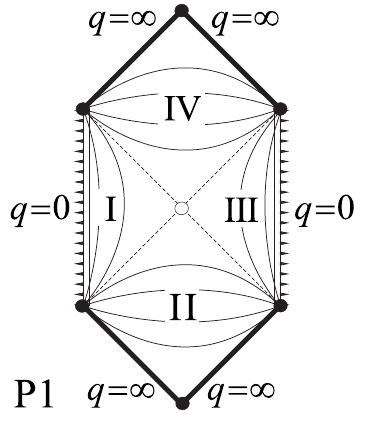}
\hfill {}
\centering\caption{The Carter--Penrose diagram for planar solution for
$\Lambda=0$ and $M>0$.}
\label{fcarpenpl}
\end{figure}

{\bf Naked singularity.}
Under conditions:
\begin{equation}                                                  \label{ubvcgf}
  \Lambda=0,\qquad M\le0,
\end{equation}
positive roots of the conformal factor are absent, and we have naked singularity
S5 in Fig.~\ref{fcarpenspherEM}.
\qed

To find zeroes for nonzero cosmological constant $\Lambda\ne0$, we introduce
auxiliary function $\phi(q)$ representing the conformal factor for signature
$(+---)$ in the form
\begin{equation}                                                  \label{ubvuiu}
  \Phi(q)=:\frac{\phi(q)+3Q^2}{3q^2},
\end{equation}
where
\begin{equation}                                                  \label{uvxbsf}
  \phi(q):=-6Mq-\Lambda q^4.
\end{equation}
For the opposite signature, ${\sf\,sign\,}\widehat g=(-+++)$, it is needed to make
replacement $Q^2\mapsto-Q^2$. We see that on intervals $(0,\infty)$ and
$(-\infty,0)$ the number and type of zeroes of the conformal factor coincide
with zeroes of numerator $\phi(q)+3Q^2$. It means that auxiliary function must
be shifted either downwards (signature $(+---)$), or upwards (signature
$(-+++)$).

Auxiliary function (\ref{uvxbsf}) has two real roots:
\begin{equation*}
  q=0,\qquad q=\sqrt[3]{-\frac{6M}\Lambda},
\end{equation*}
and two complex conjugate roots which do not interest us. Qualitative behavior
of the auxiliary function and corresponding conformal factor are shown in
Fig.~\ref{fvplanar}. Position of extrema of the auxiliary function is defined by
the equality
\begin{equation*}
  \phi'(q)=-6M-4\Lambda q^3=0\qquad\Rightarrow\qquad q=\sqrt[3]{-\frac{3M}{2\Lambda}}.
\end{equation*}
We denote them by $q_5$ and $q_6$ for $\Lambda>0$ and $\Lambda<0$, respectively
(see.\ Fig.~\ref{fvplanar}, {\it a}).
\begin{figure}[hbt]
\hfill\includegraphics[width=.9\textwidth]{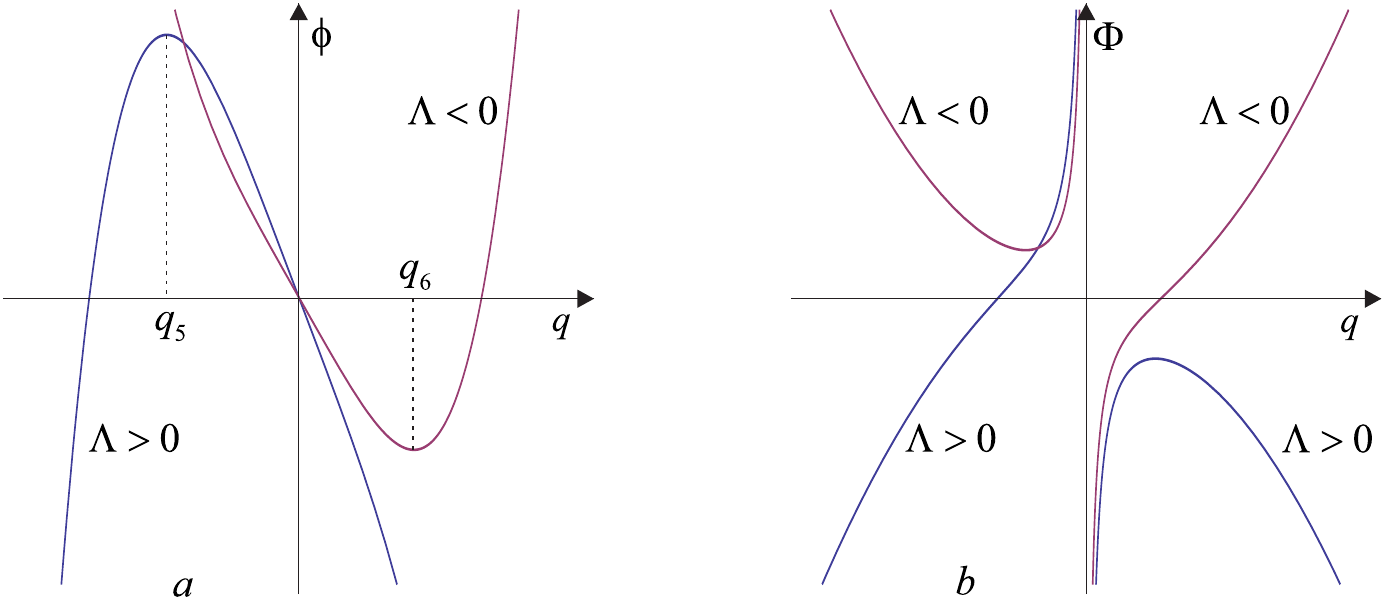}
\hfill {}
\centering\caption{Auxiliary function $\phi(q)$ {\em (a)} and conformal factor
$\Phi(q)$ {\em (b)} при $M>0$. Maximum and minimum of the auxiliary function
are located ap points $q_5$ and $q_6$ for $\Lambda>0$ and $\Lambda<0$, respectively.}
\label{fvplanar}
\end{figure}
The maximal and minimal values of the auxiliary function are denoted by
\begin{equation*}
  \phi_{5,6}:=\phi(q_{5,6})=\frac92M\sqrt[3]{\frac{3M}{2\Lambda}}.
\end{equation*}
It is clear, that $\phi_5>0$ for $\Lambda>0$ and $\phi_6<0$ for $\Lambda<0$.

Detailed analysis show that Carter--Penrose diagrams for all planar solutions
for $\Lambda\ne0$ were already met in the spherically symmetric case. Therefor,
to save space, we give classification of all planar solutions in table
\ref{tplanar}.
\begin{table}[htb]
\begin{center}
  \begin{tabular}{|c|c|c|c|l|}                                            \hline
    $+---$ & $\Lambda>0$ & $\forall M$ & $Q\ne0$ & S7 \\ \hline
    $+---$ & $\Lambda<0$ & $M>0$ & $0<3Q^2<-\phi_6$ & S11 \\ \hline
    $+---$ & $\Lambda<0$ & $M>0$ & $3Q^2=-\phi_6$ & S10 \\ \hline
    $+---$ & $\Lambda<0$ & $M>0$ & $3Q^2>-\phi_6$ & S9 \\ \hline
    $+---$ & $\Lambda<0$ & $M\le0$ & $Q\ne0$ & S9 \\ \hline
    $-+++$ & $\Lambda>0$ & $M\ge0$ & $Q\ne0$ & S14 \\ \hline
    $-+++$ & $\Lambda>0$ & $M<0$ & $0<3Q^2<\phi_5$ & S13 \\ \hline
    $-+++$ & $\Lambda>0$ & $M<0$ & $3Q^2=\phi_5$ & S15 \\ \hline
    $-+++$ & $\Lambda>0$ & $M<0$ & $3Q^2>\phi_5$ & S14 \\ \hline
    $-+++$ & $\Lambda<0$ & $\forall M$ & $Q\ne0$ & S16 \\ \hline
    \end{tabular}
    \caption{\label{tplanar} Classification of global planar solutions
    for $\Lambda\ne0$.}
\end{center}
\end{table}
Note, that diagrams S7, S9, S10 and S11 differ from diagrams S16, S14, S15 and
S13 by the turn on $90^\circ$ degrees, respectively.
\section{Hyperbolic global solutions                             \label{sbcvdg}}
If Gaussian curvature of surface ${\mathbb V}$ is negative, $K^h=-1$, then the surface
is two-sheeted hyperboloid ${\mathbb H}^2$, more precisely, the upper sheet of
two-sheeted hyperboloid (the Lobachevsky plane). It is the universal covering
surface for closed Riemannian surfaces of genus two and higher. If ${\mathbb V}={\mathbb H}^2$,
then the isometry group is the Lorentz group ${\mathbb S}{\mathbb O}(1,2)$. In this case, the
metric in Schwarzschild coordinates $(\zeta,q,\theta,\varphi)$ for signature $(+---)$
has the form
\begin{equation}                                                  \label{uvxbif}
  ds^2=\Phi(q)d\zeta^2-\frac{dq^2}{\Phi(q)}-q^2d\Omega_{\textsc{h}},
\end{equation}
where
\begin{equation*}
  \Phi(q)=-1-\frac{2M}q+\frac{Q^2}{q^2}-\frac{\Lambda q^2}3,\qquad
  d\Omega_{\textsc{h}}:=d\theta^2+{\sf\,sh\,}^2\theta d\varphi^2.
\end{equation*}
The conformal factor for this metric differs from that in the spherically
symmetric case (\ref{uvcfre}) by the transformation
\begin{equation}                                                  \label{unvbfg}
  \Phi\mapsto-\Phi,\qquad M\mapsto-M,\qquad Q^2\mapsto-Q^2,\qquad\Lambda\mapsto-\Lambda.
\end{equation}
In addition, transformation $Q^2\mapsto-Q^2$ corresponds to signature change of
the metric, $(+---)\mapsto(-+++)$. Since we have already described global
spherically symmetric solutions for all values of $M,Q^2$ and $\Lambda$, all
hyperbolic solutions are obtained from spherically symmetric ones by simple
rotation of Carter--Penrose diagrams by $90^\circ$, which corresponds to
transformation $\Phi\mapsto-\Phi$. In this way we get 16 additional
Carter--Penrose diagrams.
\section{Conclusion}
We assumed that four-dimensional space-time is the warped product of two
surfaces, ${\mathbb M}={\mathbb U}\times{\mathbb V}$, and find a general solution of Einstein's
equations with cosmological constant and electromagnetic field. These solutions
are well known locally and partly globally. We give classification of all global
solutions in the case when surface ${\mathbb V}$ is of constant curvature. Totally,
there are 37 topologically different global solutions.
These solutions in case {\sf B} have four Killing vector fields, three
of them corresponding to symmetry of the metric on constant curvature surface
${\mathbb V}$. They are generators of isometry groups ${\mathbb S}{\mathbb O}(3)$, ${\mathbb I}{\mathbb S}{\mathbb O}(2)$, and
${\mathbb S}{\mathbb O}(1,2)$ in cases when surface ${\mathbb V}$ is a sphere ${\mathbb S}^2$, Euclidean plane
${\mathbb R}^2$, and two-sheeted hyperboloid ${\mathbb H}^2$, respectively. The fourth Killing
vector
generalizes Birkhoff's theorem. In all cases, there is ``spontaneous symmetry
emergence'' because the existence of Killing vector fields was not assumed at
the beginning, and their appearance is the consequence of Einstein's equations.
Most probably, part of the constructed solutions are not satisfactory from
physical point of view. For example, for given signs in the Lagrangian and
signature of the metric $(-+++)$, the Carter--Penrose diagram for charged
black hole coincide with the Schwarzschild solution. However, the quadratic form
of momenta in the canonical Hamiltonian for physical degrees of freedom is not
positive definite (ghosts appearance), and this solution have to be discarded as
unphysical. Nevertheless, the given classification of global solutions of
Einstein's equations in the form of warped product of two surfaces is important,
because we must know what is to be discarded.

\end{document}